\def\be{\begin{equation}}
\def\ee{\end{equation}}
\def\ben{\begin{equation*}}
\def\een{\end{equation*}}
\def\bea{\begin{eqnarray}}
\def\eea{\end{eqnarray}}
\def\beaa{\begin{eqnarray*}}
\def\eeaa{\end{eqnarray*}}
\def\tb{\textbf}
\def\mb{\mathbf}
\def\re{\mathfrak{Re}}
\def\im{\mathfrak{Im}}
\theoremstyle{remark} \newtheorem{lemma}{Lemma}
\theoremstyle{remark} 
\theoremstyle{remark} 
\begin{document}

%
% paper title
% can use linebreaks \\ within to get better formatting as desired
\title{Spatio-Spectral Radar Beampattern Design for Co-existence with Wireless Communication Systems }

\author{Bosung~Kang,~\IEEEmembership{Member,~IEEE,}
	Omar~Aldayel,~%\IEEEmembership{Student~Member,~IEEE,}
	Vishal~Monga,~\IEEEmembership{Senior~Member,~IEEE,}
	and Muralidhar~Rangaswamy,~\IEEEmembership{Fellow,~IEEE}
	 \thanks{Research was supported by AFOSR grant number FA9550-15-1-0438.
	
	 Dr. Rangaswamy was supported by the Air Force Office of Scientific Research under project LRIR 17 RYCOR 481.}% <-this % stops a space
}

% make the title area
\maketitle

%\markboth{Accepted to IEEE Transactions on Aerospace and Electronic Systems}{Kang \MakeLowercase{\textit{et al.}}: Spatio-Spectral Radar Beampattern Design}

\begin{abstract}
 We address the problem of designing a transmit beampattern for multiple-input multiple-output (MIMO) radar considering co-existence with wireless communication systems. The designed beampattern is able to manage the transmit energy in \emph{spatial} directions as well as in \emph{spectral} frequency bands of interest by minimizing the deviation of the designed beampattern versus a desired one under a spectral constraint as well as the constant modulus constraint. While unconstrained beampattern design is straightforward, a key open challenge is jointly enforcing the spectral  constraint in addition to the constant modulus constraint on the radar waveform. A new approach is proposed in our work, which involves solving a sequence of constrained quadratic programs such that constant modulus is achieved at convergence. Further, we show that each problem in the sequence has a closed form solution leading to analytical tractability. We evaluate the proposed beampattern  with interference control (BIC) algorithm against the state-of-the-art MIMO beampattern design techniques and show that BIC achieves closeness to an idealized beampattern along with desired spectral shaping.
\end{abstract}

% IEEEtran.cls defaults to using nonbold math in the Abstract.
% This preserves the distinction between vectors and scalars. However,
% if the conference you are submitting to favors bold math in the abstract,
% then you can use LaTeX's standard command \boldmath at the very start
% of the abstract to achieve this. Many IEEE journals/conferences frown on
% math in the abstract anyway.

% no keywords

\begin{IEEEkeywords}
MIMO radar, beampattern design, spectral constraint, constant modulus, successive algorithm, waveform design, closed form solution, spectral co-existence
\end{IEEEkeywords}

% For peer review papers, you can put extra information on the cover
% page as needed:
% \ifCLASSOPTIONpeerreview
% \begin{center} \bfseries EDICS Category: 3-BBND \end{center}
% \fi
%
% For peerreview papers, this IEEEtran command inserts a page break and
% creates the second title. It will be ignored for other modes.
\IEEEpeerreviewmaketitle

% \input{section1}
% \input{section2}
% \input{section3}
% \input{section4}
% use section* for acknowledgement
%\section*{Acknowledgment}

\section{Introduction}
\label{Sec:Introduction}
In wideband radar applications such as the high-resolution and ultra wideband (UWB) noise radars, the radar system requires a large bandwidth.  For example, in microwave systems and UWB noise radar, the waveform bandwidth is about 1 GHz, while in ultra high frequency (UHF) systems the waveform bandwidth can exceed 200 MHz \cite{Nunn12,Rowe14,Surender10}. In these applications, radar emissions will overlap with the spectrum allocated for communications and other wireless systems. Co-existence of radar and telecommunication systems has been an emerging requirement recently \cite{Aubry14AES,Aubry14Radarcon,Shepherd13,Guo14,Xin14,Huleihel13,Tang10,Amuru13,Aubry15,Guerci15,Lackpour16}. A priori knowledge about expected target locations and the radio frequency environment enables MIMO radar systems to enhance the probability of detection while ensuring compatibility with civilian wireless systems. Specifically, the MIMO radar should focus the radiation beam in the expected target directions while maintaining a low spectral interference level at specific bands used by other licensed wireless systems. These two objectives can be achieved by constrained optimization of the radar transmit waveform \cite{Skolnik90,Gini12}. 

When it comes to radar beampattern optimization/design problem, two main research directions have been actively pursued to ensure co-existence of radar and communication systems in the past years. First, optimization of MIMO radar waveform to match the desired beampattern with an arbitrary spectrum shape has been a topic of much recent interest \cite{Guo15,Wang12,Zang14,Sen13,Ahmed14,Zhang15,Pan16,Chen14,Stoica08,Hua13,Xu15,He11,Aldayel17TSP,Roberts08,Deng13,Fuhrmann08,Stoica08WS,Ahmed11,Aubry16MR}. In these methods, the goal of the optimization problem is to minimize deviation of the optimized beampattern to the desired one which is designed to reduce the transmit energy at \emph{spatial} angles where the communication systems are located. Some of these works focus on receive beampattern design \cite{Chen14,Roberts08,Deng13} while most others focus on the transmit beampattern design. On the other hand, mitigation of the energy of the transmit waveform in the \emph{spectral} frequency bands occupied by wireless communication systems has also been studied \cite{Rowe14}. This approach matches the spectral shape of the optimized waveform to the desired one which is designed to limit the interference level on communication systems or to directly minimize the interference level at communication receivers. However, since the beampattern is not considered, it is not able to control the radiation beam in spatial directions.

\subsection{Motivation and Challenges}
In practice, the transmit beampattern design is more challenging for two reasons. The first reason is the requirement of the constant modulus constraint on the radar transmit waveform, i.e. a constant envelope transmit signal \cite{Patton08}. The importance of the constant modulus waveform has been well documented and analyzed in terms of performance loss \cite{Patton08,PattonThesis,Friedlander07}. A non-linear power amplifier which is equipped in most radar systems cannot be efficiently utilized without the constant modulus constraint since the output of the amplifier will be a clipped version of the optimized waveform. The second reason is the requirement of spectral compatibility of radar and telecommunication systems, which demands a spectral constraint on the radar waveform spectral shape. Designing the MIMO radar beampattern in the simultaneous presence of constant modulus and spectral constraints remains a stiff open challenge.

%It is well known that the problem of waveform design subject to the constant modulus constraint constitutes a hard non-convex problem. To ensure tractability, some existing approaches pursue relaxations to energy constraint (using $L2$ norm) \cite{Aubry14AES,Aubry16} or approximations to the constant modulus constraint \cite{Stoica08,He11,Rowe14}. This indirect approximation makes the problem more tractable, however, it degrades the design accuracy. Some recent efforts directly enforce the constant modulus constraint and hence lead to better performance. However, they invariably involve semi-definite relaxation (SDR) with randomization \cite{Cui14TSP,Luo10}. In this approach, a semi-definite programming (SDP) is first solved to find a waveform distribution. Then a large number of trials are generated based on this distribution, which is followed by exhaustive search to find the best waveform. Although SDR with randomization gives a good approximate solution with the constant modulus constraint, there is no guarantee of a reasonable solution.

It is well known that the MIMO transmit beampattern/waveform design subject to the constant modulus constraint constitutes a hard non-convex problem. To ensure tractability, some existing approaches pursue relaxations to energy constraint (using $L2$ norm) \cite{Aubry14AES,Aubry16} or approximations to the constant modulus constraint \cite{Stoica08,He11,Rowe14}. This indirect approximation makes the problem more tractable, however, it degrades the design accuracy. Some recent efforts directly enforce the constant modulus constraint and hence lead to better performance. However, they invariably involve semi-definite relaxation (SDR) with randomization \cite{Cui14TSP,Luo10}. In this approach, a semi-definite programming (SDP) is first solved to find a waveform distribution. Then a large number of random waveforms are generated based on this distribution, which is followed by an exhaustive search to find the closest waveform. Despite the success of SDR for constant modulus constrained problems, two issues remain: 1.) extensions to spectral constraints, which are quadratic inequalities are {\em not} straightforward, and 2.) the computational burden is high.

Beampattern design under the constant modulus constraint but without the spectral  constraint has been studied in \cite{Guo15,Zang14,Wang12,He11,Stoica09SPL,Aldayel17TSP}. In the beampattern design problems, an approximation to constant modulus was pursued using the peak-to-average power ratio (PAPR) waveform constraint \cite{Stoica08,He11}. While the constant modulus constraint is not explicitly represented in the optimization process, the resulting solution is converted to the nearest constant modulus solution.

Similarly, as mentioned before there is active interest in radar-comm co-existence where the transmit waveform is optimized but without the constant modulus constraint \cite{Aubry14AES, Shepherd13,Guo14,Xin14, Aubry15,Guerci15,Lackpour16}. 

Of particular interest is the  recent work of Guerci \emph{et al.} which presents a new paradigm for the joint design and operation (JDO) of shared spectrum access for radar and communications (SSPARC) \cite{Guerci15}. They optimize transmit waveforms at both radar and communication nodes in a way that maximizes the signal power through the forward channels (resp. of radar and communication systems) while simultaneously minimizing the response in the co-channels between radar and communications. This optimization can be extended to achieve a low probability of intercept capability in specific angular keep-out zones where co-channel RF nodes are located. A crucial difference of our proposal from Guerci's work (and others that design beampatterns for spectral co-existence) is that they consider a shared-spectrum scenario and hence the design is spatially based. Whereas, we consider spatio-spectral design where the frequency spectrum of the transmit radar waveform is explicitly shaped. Further, a constant modulus constraint is enforced for practical deploy-ability.
%On the other hand, to ensure both co-existence of radar and communication systems and constant modulus, the objective of some recent work is to minimize radiation power in a few selected directions while the design criterion does not allow full control of power allocation. An interesting recent advance that forces the constant modulus constraint for beampattern design has been proposed in \cite{Guo15} which sets up waveform design as a phase optimization problem and solves it using a typical iterative numerical method but with no known analytical guarantees of the resulting solution.

% In this work, we develop a new algorithm to design radar waveforms for transmit beamforming for both narrowband and wideband MIMO systems termed Sequence of Closed Forms (BIC) algorithm. Furthermore, the algorithm

% On the other hand, the Similarity Constraint (SC), uses a reference signal as a benchmark to produce an optimized waveform that shares some of the desirable autocorrelation properties of the reference waveform. As noted in \cite{friedlander2007waveform,chen2009mimo}, the resulting waveforms from algorithms that do not enforce SC suffer from undesirable artifacts in pulse compression and ambiguity function properties.
%a gradient-based methods  \cite{wang2012design}. However, since the problem is hard non-convex problem, local optimum solutions might be
%Moreover, the performance some of the state of the art methods \cite{guo2015waveform,wang2012design} is highly dependent on the initial point and the step length.

\subsection{Our contributions}
\label{Sec:Contribution}
Our principal aim is to develop an algorithmic approach for {\em spatio-spectral} MIMO beampattern design. Closeness to an idealized beampattern that limits radar energy in the direction of wireless communication receivers captures the spatial component while the spectral component of our approach involves explicitly forcing a spectral fidelity constraint.

Specifically, this paper makes the following contributions:
\begin{itemize}
	\item \textbf{A new algorithmic solution for spatio-spectral beampattern design under both the spectral  constraint and the constant modulus constraint.} To overcome the challenges mentioned above, we develop a new algorithm for MIMO beampattern design that involves solving the hard non-convex problem of beampattern design using a sequence of convex equality and inequality constrained quadratic programs (QP), each of which has a closed form solution, such that constant modulus is achieved at convergence. Because each QP in the sequence has a closed form solution, the proposed beampattern with interference control (BIC) algorithm has significantly lower complexity than 
	most competing methods.
%	SDR with relaxation \cite{Cui14TSP,Luo10} which is a representative algorithm for the constant modulus constraint.
	
	\item \textbf{Feasibility of the sequence of QPs.} Assuming that the original non-convex problem of beampattern design is feasible, i.e. the intersection set of constant modulus and spectral constraints is non-empty; we formally prove that each QP we formulate in the aforementioned BIC sequence is also guaranteed to be feasible.
	\item \textbf{Convergence of the BIC algorithm.} We establish that the sequence of cost functions representing a deviation from the desired beampattern, that occurs in the proposed BIC algorithm, is non-increasing, (i.e. an improvement is always obtained by solving each problem in the sequence) and converges.
	
	\item \textbf{Experimental insights and validation.} Experimental validation is performed across two scenarios: 1) null forming where the BIC algorithm shows significant power suppression in the desired directions even in the presence of the spectral constraint, and 2) full beampattern design where the proposed BIC is shown to achieve a beampattern much closer to the ground truth against state of the art alternatives that have no spectral interference constraint.
\end{itemize}

The rest of the paper is organized as follows. Section \ref{Sec:SystemModel} provides brief background on the structure of the radar antenna array and the corresponding design criterion. Section \ref{Sec:BIC} develops the proposed BIC algorithm for the two cases of wideband beampattern design and nullforming beampattern design and reports derivations of its analytical properties. Section \ref{Sec:Results} evaluates the proposed BIC method against state-of-the-art alternatives. Concluding remarks with directions for future work are presented in Section \ref{Sec:Conclusion}.

\subsection{Notation}

We denote vectors and matrices by boldface letters, e.g. $\mb a$ (lowercase) and $\mb A$ (uppercase), respectively. The $l$-th element of $\mb a$ is denoted by $\mb a_l$ and the element located in the \emph{m}-th row and \emph{l}-th column of the matrix $\mb A$ is denoted by $\mb A(m,l)$. We denote by $\| \mb a\|_2$ the $l_2$ norm of the vector $\mb a$. The Hermitian, conjugate and transpose operators are denoted by $(.)^H$, $(.)^*$ and $(.)^T$, respectively. For a complex number $a$, we denote $\re (a)$ and $\im (a)$ to the real and imaginary part $a$, respectively; also we denote $|a|$ and $\arg a$ to the amplitude and phase of $a$, respectively. We use $j = \sqrt{-1}$ as the imaginary unit number. Finally, we use $\otimes$ to denote the Kronecker product.

\section{System Model}
\label{Sec:SystemModel}
\begin{figure}
\centering
\includegraphics[scale=0.22]{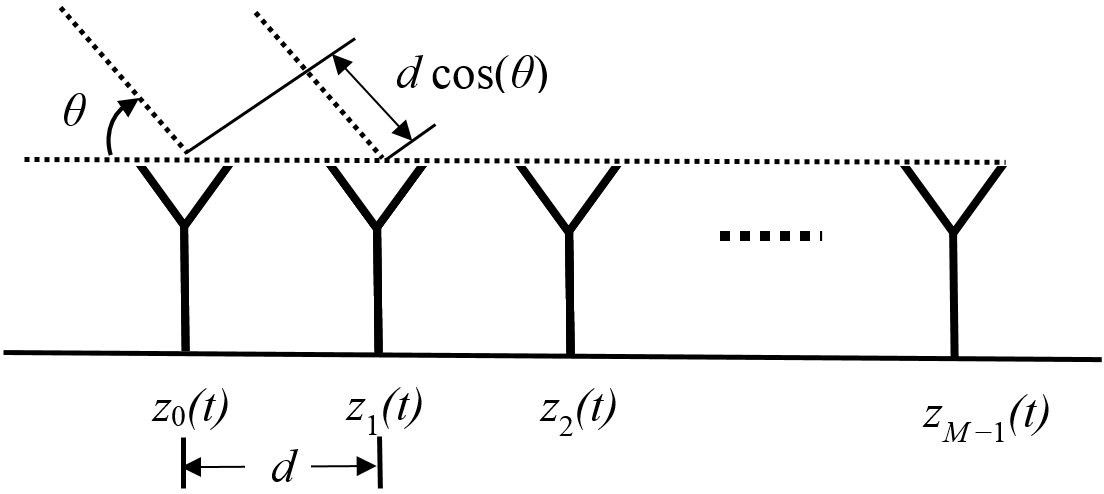}
\caption{Configuration of ULA antenna}
\label{Fig:ULA}
\end{figure}

% In the literature, the problem of designing a MIMO beam pattern can be formulated in two different w
% In \cite{he2011wideband},

Consider a wideband MIMO radar with a uniform linear array (ULA) of $M$ antennas and equal spacing distance of $d$ as shown in Fig. \ref{Fig:ULA}. The signal transmitted from the $m$-th element is denoted by $z_m(t)$. Let $z_m(t)=x_m(t) e^{j 2 \pi f_c t}$ where $x_m(t)$ is the baseband signal and $f_c$ is the carrier frequency. We assume that the spectral support of $x_m(t)$ is within the interval $[-B/2, B/2]$ where $B$ is the bandwidth in Hz. The sampled baseband signal transmitted by the $m$-th element is denoted by $x_m(n)\triangleq x_m(t=nT_s)$, $n=0,...,N-1$ with $N$ being the number of time samples and $T_s=1/B$ is the sampling rate. The discrete Fourier transform (DFT) of $x_m(n)$ is denoted by $y_m(p)$ and it is given by
\be
\label{eq:ym}
y_m(p)=\sum_{n=0}^{N-1} x_m(n) e^{-j2\pi \frac{np}{N}}, \quad p= -\frac{N}{2},\ldots,0,\ldots, \frac{N}{2}-1 
\ee
where $N$ is assumed to be even\footnote{Note that we assume that $N$ is even in this paper without loss of generality.} in Eq. (\ref{eq:ym}). If $N$ is odd, then $p= -(N-1)/2,\ldots,0,\ldots, (N-1)/2$.

\subsection{Far-Field Beampattern}
\label{Sec:FFBP}

According to \cite{He11}, the discrete frequency beampattern at the angle $\theta$ in the frequency band $p$ in the far-field is given by
\be
P(\theta, p)=|\mb a^H(\theta,p)\mb y_p|^2
\ee
where
\be
\mb a(\theta, p)=[1\quad e^{j2\pi(\frac{p}{NT_s}+f_c)\frac{d\cos\theta}{c}} \, \ldots \, e^{j2\pi(\frac{p}{NT_s}+f_c)\frac{(M-1)d\cos\theta}{c}}]^T
\ee
and
\be
\mb y_p=[y_0(p) \quad y_1(p) \quad ... \quad y_{M-1}(p)]^T
\ee
where $c$ is the speed of wave propagation. Note that $\mb a(\theta, p)$ is continuous in phase. It can be expressed as a discrete angle vector by dividing the interval $[0^{\circ}, 180^{\circ}]$ into $K$ angle bins. Using the same simplified notation found in \cite{He11}, it can be written as
\be
\mb a_{kp}=\mb a(\theta_k,p), \quad k=1, 2, \ldots, K
\ee
In this case, the beampattern can be given by the following discrete angle-frequency grid
\be
P_{kp}=|\mb a^H_{kp} \mb y_p|^2=|\mb a^H_{kp} \mb W_p \mb x|^2
\ee
where $\mb x \in \mathbb{C}^{MN}$ is the concatenated vector i.e. $\mb x=[\mb x^T_0 \quad \mb x^T_1 \quad ... \quad \mb x^T_{M-1}]^T$ where  $\mb x_{m}=[x_m(0) \quad x_m(1) \quad \ldots \quad x_m(N-1)]^T \in \mathbb{C}^{N}$ and $\mb W_p \in \mathbb{C}^{M \times MN}$ is given by
\be
\mb W_p= \mb I_M  \otimes \mb e^H_p
\ee
where $\mb e^H_p=[1 \quad e^{-j2\pi \frac{p}{N}} \quad \ldots \quad  e^{-j2\pi \frac{(N-1)p}{N}}] \in \mathbb{C}^{N}$ and $\mb I_M$ is an $M\times M$ identity matrix.

\subsection{Formulation of the Spectral Constraint}
\label{Sec:SI}
The problem of spectral co-existence has been of great interest recently \cite{Aubry14AES,Aubry14Radarcon,Shepherd13,Guo14,Xin14,Huleihel13,Tang10,Amuru13,Aubry15} and involves minimization of interference caused by radar transmission at victim communication receivers operating in the same frequency band. In this case, the beampattern of the transmit waveform is required to have nulls in these bands to prevent interference. For $J$ communication receivers, we suppose that the $j$-th communication receiver operating on a frequency band $B_j=[p^j_l, p^j_u]$, where $p^j_l$ and $p^j_u$ are the lower and upper normalized frequency, respectively. We denote the desired (discrete) spectrum shape by $ \hat{\mb y}=[\hat{y}_{-\frac{N}{2}}, \hat{y}_{-\frac{N}{2}+1}, ...,\hat{y}_{\frac{N}{2}-1}] \in \mathbb{C}^{N\times 1}$ defined as

\begin{equation*}
\hat{y}_p=
\begin{cases}
0 & \text{for } p \in B_j=[p^j_l, p^j_u], \qquad j=1,2,..., J\\
\gamma  & \text{otherwise}.
\end{cases}
\end{equation*}
where $\gamma$ is a scalar such that $\hat{\mb y}^H\mb F \mb F^H \hat{\mb y}=N$ and $\mb F$ is the DFT matrix. In SHAPE algorithm proposed by Rowe \textit{et al.} \cite{Rowe14}, a least-squares fitting approach for the spectral shaping problem for SISO has been formulated by minimizing the following cost function

\begin{equation}
\label{Eq:Stoica}
\|\mb F^H \mb x - \hat{\mb y} \|_2^2
\end{equation}
We extend \eqref{Eq:Stoica} for MIMO radar and employ it as a constraint in the optimization problem as follows.
\be
\|(\mb I_M  \otimes \mb F^H) (\mb 1_M  \otimes \hat{\mb y}  )-\mb x\|_2^2 = \|\bar{\mb F}^H \bar{\mb y}-\mb x\|_2^2 \leq E_R
\label{Eq:SIC}
\ee
where $\mb 1_M=[1, 1,\ldots, 1] \in \mathbb{R}^{M\times 1}$, $\bar{\mb F}=\mb I_M  \otimes \mb F^H$, and $\bar{\mb y}=\mb 1_M  \otimes \hat{\mb y} $, and $E_R$ is the maximum tolerable spectral error.

\subsection{Problem Formulation}
\label{Sec:PF}

The optimization problem can be formulated as the following matching problem:

\be
\label{Eq:BP}
\left\{ \begin{array}{cc}
	\displaystyle
	\min_{\mb x} &\sum_{k=1}^{K}\sum_{p=-\frac{N}{2}}^{\frac{N}{2}-1} [d_{kp}-|\mb a^H_{kp} \mb W_p \mb x|]^2\\
	%        \text{subject to:  } & |\mb x(k)|^2\leq 1/(MN) \\
	\text{s.t.:  } & |x_m(n)|=1, \text{ for }m=1, 2,\ldots, M \text{ and }\\
	 & \quad \quad \quad \quad n=0, 1,\ldots, N-1\\
	& \|\bar{\mb F}^H \bar{\mb y}-\mb x\|_2^2 \le E_R\\
\end{array} \right.
\ee
where $d_{kp} \in \mathbb{R}$ is the desired beampattern. The constraints $|x_m(n)|=1$ represent the constant modulus. These constraints are neither convex nor linear and it is well known in the literature that (\ref{Eq:BP}) is a hard non-convex problem even without the spectral  constraint. He \textit{et al.}  \cite{He11} proposed a solution to problem (\ref{Eq:BP}) without the spectral constraint by employing a peak-to-average ratio constraint as a relaxation of the constant modulus constraint. However, they used the cyclic algorithm \cite{Sussman62,Gerchberg72} to solve the unconstrained problem $\min_{\mb y_p} \sum_{k=1}^{K}\sum_{p=-\frac{N}{2}}^{\frac{N}{2}-1} [d_{kp}-|\mb a^H_{kp} \mb y_p|]^2$ in the first stage and then in the second stage they aim to find the constant modulus approximation of the solution. The algorithm does not directly minimize the cost function under constant modulus constraint or any relaxed version thereof. In this paper, we propose a new solution that minimizes the cost function of interest subject to the contant modulus constraint and the spectral constraint by solving a sequence of problems under a relaxed convex constraint such that constant modulus is still achieved at convergence. The proposed solution has the ability to break the computational cost--solution quality trade-off that has been demonstrated in past work such as SDR with randomization \cite{Luo10,Cui14TSP} or the simulated annealing approach \cite{He11}.

\textit{Remark}: The cost function of \eqref{Eq:BP} can be modified as follows: $\sum_{k=1}^{K}\sum_{p=-\frac{N}{2}}^{\frac{N}{2}-1} w_{kp}[d_{kp}-|\mb a^H_{kp} \mb W_p \mb x|]^2 $ to control the relative importance of certain frequency bands or angles; where $w_{kp}$ are positive weights such that $\sum_{k=1}^{K}\sum_{p=-\frac{N}{2}}^{\frac{N}{2}-1} w_{kp} = 1$. Note such a modification can also be easily accommodated in the analytical development presented next.

\section{Beampattern Design under Constant Modulus and Spectral Constraints}
\label{Sec:BIC}

\subsection{Non-convex Optimization Problem}
As shown in \cite{He11}, it is more convenient to rewrite the objective function of \eqref{Eq:BP} as
\be
\label{Eq:NA}
\sum_{k=1}^{K}\sum_{p=-\frac{N}{2}}^{\frac{N}{2}-1}  |d_{kp}e^{j\phi_{kp}}-\mb a^H_{kp} \mb W_p \mb x|^2
\ee
where $\phi_{kp}=\arg\{\mb a^H_{kp} \mb W_p \mb x\}$. Since $\mb x$ is unknown, $\phi_{kp}$ is also unknown for all $k$ and $p$. In the existing literature \cite{He11,Sussman62,Gerchberg72}, this problem has been resolved by an iterative method. This method first minimizes Eq. \eqref{Eq:NA} w.r.t. $\mb x$ for a fixed values of $\{\phi_{kp}\}$ and then finds the optimal $\{\phi_{kp}\}$ for the fixed $\mb x$ obtained in the previous iteration step. It has been shown that such an iterative method ensures that the cost function is monotonically decreasing and converges to a finite value. Therefore, we focus on solving the following constrained problem for a fixed $\{\phi_{kp}\}$.
\be
\label{Eq:BPO}
\left\{ \begin{array}{cc}
	\displaystyle
	\min_{\mb x} &\sum_{k=1}^{K}\sum_{p=-\frac{N}{2}}^{\frac{N}{2}-1} |d_{kp}e^{j\phi_{kp}}-\mb a^H_{kp} \mb W_p \mb x|^2\\
	%        \text{subject to:  } & |\mb x(k)|^2\leq 1/(MN) \\
	\text{s.t.:  } & |x_m(n)|=1, \text{ for }m=1, 2,\ldots, M \text{ and }\\
	 & \quad \quad \quad \quad n=0, 1,\ldots, N-1\\
	& \|\bar{\mb F}^H \bar{\mb y}-\mb x\|_2^2 \le E_R\\
\end{array} \right.
\ee
First, let us define the following
\be
\mb A_{p}={\begin{bmatrix}
		\mb a^H_{1p} \\
		\vdots\\
		\mb a^H_{Kp}
\end{bmatrix}}, \quad
\mb d_{p}={\begin{bmatrix}
		d_{1p} e^{j\phi_{1p}}\\
		\vdots\\
		d_{Kp} e^{j\phi_{Kp}}
\end{bmatrix}}
\ee
Then the objective function of \eqref{Eq:BPO} can be rewritten in terms of $\mb A_{p}$ and $\mb d_{p}$ \cite{Aldayel17TSP}
\begin{align}
\label{Eq:NCx}
f(\mb x)=&\sum_{p} \|\mb d_p-\mb A_p \mb W_p \mb x \|_2^2\\
% =&\sum_{p} \mb x^H \mb W^H_p \mb A^H_p\mb A_p \mb W_p\mb x - \mb d^H_p \mb A_p \mb W_p\mb x - \mb x^H \mb W^H_p \mb A^H_p \mb d_p\nonumber\\
% &+\sum_{p}\mb d_p^H\mb d_p\nonumber\\
% =& \mb x^H \big(\sum_{p}  \mb W^H_p \mb A^H_p\mb A_p \mb W_p\big)\mb x - \big(\sum_{p}\mb d^H_p \mb A_p \mb W_p\big)\mb x \nonumber\\
%  &- \mb x^H \big (\sum_{p}\mb W^H_p \mb A^H_p \mb d_p \big)+\sum_{p}\mb d_p^H\mb d_p\nonumber\\
=&\mb x^H \mb P \mb x - \mb q^H \mb x - \mb x^H \mb q+r
\end{align}
where $\mb P=\sum_{p}  \mb W^H_p \mb A^H_p\mb A_p \mb W_p$, $\mb q=\sum_{p}\mb W^H_p \mb A^H_p \mb d_p$ and $r=\sum_{p}\mb d_p^H\mb d_p$. Moreover, the spectral constraint can also be simplified as
\medskip
\begin{align}
\|\bar{\mb F}^H \bar{\mb y}-\mb x\|^2_2 &=(\bar{\mb F}^H \bar{\mb y}-\mb x)^H (\bar{\mb F}^H \bar{\mb y}-\mb x)\nonumber\\
&=\mb x^H \mb x -2\re\{ \bar{\mb y}^H \bar{\mb F}\mb x\}+\bar{\mb y}^H \bar{\mb F}\bar{\mb F}^H \bar{\mb y}\nonumber\\
&=2L-2\re\{\bar{\mb y}^H \bar{\mb F}\mb x\}\nonumber
\end{align}
where $L=MN$. Hence, the spectral constraint can be rewritten as
\be
\re\{\bar{\mb y}^H \bar{\mb F}\mb x\} \geq(1-E_R/2)L\nonumber
\ee

The optimization problem \eqref{Eq:BPO} is equivalent to the following problem.
\be
\label{Eq:BPMatrixform}
\left\{ \begin{array}{cc}
	\displaystyle
	\min_{\mb x} & \mb x^H \mb P \mb x - \mb q^H \mb x - \mb x^H \mb q+r\\
	\text{s.t.:  } & |x_m(n)|=1, \text{ for }m=1, 2,\ldots, M \text{ and }\\
	 & \quad \quad \quad \quad n=0, 1,\ldots, N-1\\
	& \re\{\bar{\mb y}^H \bar{\mb F}\mb x\} \geq(1-E_R/2)L\\
	% & \|\mb x-\mb x_0\|\le E_R
\end{array} \right.
\ee
Moreover, $f(\mb x)$ can be converted to the following function with \emph{real} (as opposed to complex) variables.
\be
f_R(\mb u)=\mb u^T \mb G \mb u -  \mb t^T\mb u-\mb u^T\mb t+r
\ee
where
\be
\mb u = [\re \{\mb x\}^T \im \{\mb x\}^T]^T
\ee
\be
\mb G={\begin{bmatrix}
		\re \{\mb P\}   &-\im \{\mb P\}\\
		\im \{\mb P\}   &\re \{\mb P\}\\
\end{bmatrix}}
\ee
\be
\mb t={\begin{bmatrix}
		\re \{\mb q\} \\
		\im \{\mb q\}  \\
\end{bmatrix}}
\ee
The problem \eqref{Eq:BPMatrixform} can be rewritten as
\be
\label{Eq:BPreal}
\left\{ \begin{array}{cc}
	\displaystyle
	\min_{\mb s} &\mb s^T (\mb R+\lambda \mb I) \mb s\\
	\text{s.t.:  } & \mb s^T \mb E_l \mb s =1, \quad l=1,2,\ldots,L\\
	&  \bar{\mb s}^T\mb s\geq(1-E_R/2)L
\end{array} \right.
\ee
where $\lambda$ is an arbitrary positive number,
\be
\label{Eq:sbar}
\mb {\bar s}= [\re \{\bar{\mb F}^H\bar{\mb y}\}^T \; \im \{\bar{\mb F}^H\bar{\mb y}\}^T \; 0]^T,
\ee
\be
\mb R={\begin{bmatrix}
		\mb G   &-\mb t\\
		-\mb t^T & r\\
\end{bmatrix}},
\ee
\be
\mb s={\begin{bmatrix}
		\re \{\mb x\} \\
		\im \{\mb x\}  \\
		1\\
\end{bmatrix}},
\ee
% \een
% \ben
% \mb s={\begin{bmatrix}
%             \mb u\\
%              1\\
%          \end{bmatrix}},
% \een
and $\mb E_{l}$ is a $2L+1\times 2L+1$ matrix given by
\be
\mb E_{l}(i,j)=
\begin{cases}
1 & \text{if } i=j= l\\%, \text{ and } l \le L,\\
1 & \text{if } i=j= l+L\\%, \text{ and } l \le L,\\
%1 & \text{if } i=j=2L+1, \text{ and } l = L+1,\\
0 & \text{otherwise}.
\end{cases}
\ee
Note that, since
\begin{align}
\mb s^T \mb R \mb s & = \mb x^H \mb P \mb x - \mb q^H \mb x - \mb x^H \mb q+r\\
& = \sum_{p} \|\mb d_p-\mb A_p \mb W_p \mb x \|_2^2\\
& \ge 0
\end{align}
, $\mb R$ is positive semi-definite. Further, because the problem \eqref{Eq:BPreal} enforces constant modulus, i.e., $\mb s^T \mb E_l \mb s =1$ for $l=1,2,\ldots,L$, $\lambda \mb s^T \mb s$ is a constant value ($\lambda \mb s^T \mb s=\lambda (L+1)$). As a result, \eqref{Eq:BP} and \eqref{Eq:BPreal} are the identical optimization problems and the optimal solution of \eqref{Eq:BP} and the resulting complex solution of \eqref{Eq:BPreal} are also identical for any $\lambda \geq 0$.
\subsection{Sequence of Closed Form Solutions}
\label{Sec:SCF}
Now we focus on solving \eqref{Eq:BPreal}. Though it is minimization of a convex objective function, it is still non-convex because of the constant modulus constraint. We propose a new sequential approach to solve \eqref{Eq:BPreal} which involves solving a sequence of convex problems. Let us consider the following sequence of constrained QPs where the $n$-th QP is given by
\be
\label{Eq:BPCP}
(CP)^{(n)}
\left\{ \begin{array}{cc}
	\displaystyle
	\min_{\mb s} & \mb s^T (\mb R+\lambda \mb I) \mb s\\
	\text{s.t.:  } & \mb B^{(n)}\mb s= \mb 1\\
	&\bar{\mb s}^{(n)T}\mb s\geq(1-E_R/2)L
\end{array} \right.
\ee
where $\bar{\mb s}^{(n)}$ is given by: 
% \odor e^{j \boldsymbol{\beta}}

\be
\label{Eq:sbarn}
\bar{\mb s}^{(n)}={\begin{bmatrix}
		\re \{(\bar{\mb F}^H\bar{\mb y} )\odot e^{\{j \arg(\mb x^{(n-1)})-\arg(\bar{\mb F}^H \bar{\mb y})\} } \} \\
		\im \{(\bar{\mb F}^H\bar{\mb y} )\odot e^{\{j \arg(\mb x^{(n-1)})-\arg(\bar{\mb F}^H \bar{\mb y})\} }\}\\
		0
\end{bmatrix}}
\ee
% \be
% \label{Eq:sbarn}
% \bar{\mb s}^{(n)}= [\re \{(\bar{\mb F}^H(\bar{\mb y} )\odot e^{j \arg(\mb x^{(n-1)})-\arg(\bar{\mb F}^H \bar{\mb y})} \}^T \; \im \{\bar{\mb F}^H\bar{\mb y}\}^T \; 0]^T,
% \ee
and $\mb B^{(n)}=[\mb b^{(n)}_{1}, \mb b^{(n)}_{2} , ..., \mb b^{(n)}_{L+1}]^T \in \mathbb{R}^{(L+1)\times (2L+1)}$ such that the line defined by ${\mb b^{(n)T}_{l}}\mb s=1$ is a tangent to the circle $\mb s^T \mb E_l \mb s =1$ for $l=1, 2,\ldots, L$. Specifically, $\mb b_l$ is given by
\begin{align}
\label{Eq:bnn}
\mb b_l^{(n)}(i)= \begin{cases}
\cos(\gamma_l^{(n)}) &\text{if } i=l\\
\sin(\gamma_l^{(n)}) &\text{if } i=l+L\\
0 & \text{otherwise}.
\end{cases}
\end{align}
for $l=1,\ldots,L$ and $\mb b_{L+1}^{(n)} = [0,\ldots,0,1]^T$ where $\gamma_l^{(n)} = 2 \arg(x_l^{(n-1)}) - \gamma_l^{(n-1)}$ and $x_l^{(n)}$ is the $l$-th elements of $\mb x^{(n)}$ which is the complex version of the optimal solution of \eqref{Eq:BPCP}, $\mb s^{(n)}$, that is, $x_l^{(n)}= s_l^{(n)}+j s_{l+L}^{(n)}$ and conversely $\mb s^{(n)} = [\re \{\mb x^{(n)}\}^T \im \{\mb x^{(n)}\}^T \,\, 1]^T$. Note that, the term $e^{\{j \arg(\mb x^{(n-1)})-\arg(\bar{\mb F}^H \bar{\mb y})\} } \}$ in (\ref{Eq:sbarn}) depends on the argument $\mb x^{(n-1)}$, which changes $\bar{\mb s}^{(n)}$  in each iteration.

Although the problem \eqref{Eq:BPCP} does not result in a constant modulus solution, a sequence of such problems (in the index $n$) ensures a non-increasing sequence of cost function values, such that the sequence of the corresponding optimal solutions converges to constant modulus for large enough $\lambda$\footnote{For a formal proof of this, see \cite{BIC_TR_2017}}. To recognize this, we first show that the constraints of $CP^{(n)}$ in \eqref{Eq:BPCP} are adjusted so that the feasible set of $CP^{(n)}$ includes $\mb x^{(n-1)}$.

\begin{lemma}
\label{Lem:FeasibleSet}
The feasible set of problem $CP^{(n)}$ contains the optimal solution of problem $CP^{(n-1)}$.
\end{lemma}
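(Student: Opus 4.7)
The plan is to verify directly that $\mb s^{(n-1)}$, the optimal (and hence feasible) solution of $CP^{(n-1)}$, satisfies both the equality constraint $\mb B^{(n)} \mb s = \mb 1$ and the spectral inequality $\bar{\mb s}^{(n)T} \mb s \geq (1-E_R/2)L$ that define the feasible set of $CP^{(n)}$. The two constraints require separate arguments, and both rely on unpacking the trigonometric structure of the rows $\mb b_l^{(n)}$ and of $\bar{\mb s}^{(n)}$, expressed in terms of $\arg(x_l^{(n-1)})$.

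For the equality constraint, I would compute $\mb b_l^{(n)T}\mb s^{(n-1)}$ for $l \le L$ using the definition of $\mb b_l^{(n)}$ in \eqref{Eq:bnn} and the identity $\re\{x_l^{(n-1)}\} = |x_l^{(n-1)}|\cos(\arg(x_l^{(n-1)}))$, $\im\{x_l^{(n-1)}\} = |x_l^{(n-1)}|\sin(\arg(x_l^{(n-1)}))$. A cosine-of-difference collapse gives $\mb b_l^{(n)T}\mb s^{(n-1)} = |x_l^{(n-1)}|\cos\bigl(\gamma_l^{(n)} - \arg(x_l^{(n-1)})\bigr)$. Substituting the update $\gamma_l^{(n)} = 2\arg(x_l^{(n-1)}) - \gamma_l^{(n-1)}$ reduces the argument of the cosine to $\arg(x_l^{(n-1)}) - \gamma_l^{(n-1)}$, which by parity of cosine equals $\gamma_l^{(n-1)} - \arg(x_l^{(n-1)})$. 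But the same calculation applied to $\mb b_l^{(n-1)T}\mb s^{(n-1)}$ yields $|x_l^{(n-1)}|\cos(\gamma_l^{(n-1)} - \arg(x_l^{(n-1)}))$, which equals $1$ because $\mb s^{(n-1)}$ was feasible for $CP^{(n-1)}$. The last row $\mb b_{L+1}^{(n)} = [0,\ldots,0,1]^T$ is trivial since the last entry of $\mb s^{(n-1)}$ is $1$ by construction.

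For the spectral inequality, let $\mb z = \bar{\mb F}^H \bar{\mb y}$. The entrywise complex multiplication in \eqref{Eq:sbarn} replaces $\arg(z_l)$ with $\arg(x_l^{(n-1)})$ while preserving $|z_l|$, so $\bar{\mb s}^{(n)}$ has components $|z_l|\cos(\arg(x_l^{(n-1)}))$ and $|z_l|\sin(\arg(x_l^{(n-1)}))$ in its upper and middle blocks. Dotting with $\mb s^{(n-1)}$ and again applying a cosine-of-difference collapse yields the clean identity $\bar{\mb s}^{(n)T}\mb s^{(n-1)} = \sum_{l=1}^{L} |z_l|\,|x_l^{(n-1)}|$. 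An analogous computation for iteration $n-1$ gives $\bar{\mb s}^{(n-1)T}\mb s^{(n-1)} = \sum_{l=1}^{L} |z_l|\,|x_l^{(n-1)}|\cos(\arg(x_l^{(n-1)}) - \arg(x_l^{(n-2)}))$. Since every cosine factor is at most $1$, we get $\bar{\mb s}^{(n)T}\mb s^{(n-1)} \geq \bar{\mb s}^{(n-1)T}\mb s^{(n-1)} \geq (1-E_R/2)L$, where the last inequality uses feasibility of $\mb s^{(n-1)}$ in $CP^{(n-1)}$.

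I do not anticipate a hard obstacle: the argument is essentially bookkeeping with the phase-update rule and the observation that replacing the phases of $\mb z$ with those of $\mb x^{(n-1)}$ only increases the inner product with $\mb s^{(n-1)}$. The one subtle point worth stating carefully is the base case $n=1$, where the spectral inequality for $\mb s^{(0)}$ must be assumed (or $\mb x^{(0)}$ chosen to satisfy it), so that the induction-style chain of inequalities can start; feasibility of the original non-convex problem \eqref{Eq:BPreal} guarantees that such an $\mb x^{(0)}$ exists.
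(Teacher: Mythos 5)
Your proposal is correct and follows essentially the same route as the paper's proof: the equality constraint is verified by the identical cosine-of-difference collapse under the reflection update $\gamma_l^{(n)} = 2\arg(x_l^{(n-1)}) - \gamma_l^{(n-1)}$, and the spectral inequality rests on the same phase-alignment observation (the paper phrases it via the chain $\re\{\cdot\}\le|\cdot|\le\sum|\cdot|$ rather than your per-term cosine bound, but the content is identical). Your closing remark about the base case $n=1$ is a worthwhile point that the paper leaves implicit in its standing feasibility assumption.
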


\begin{proof}
Let $\mb s^{(n-1)}$ be the optimal solution of $CP^{(n-1)}$. Then $\mb B^{(n-1)}\mb s^{(n-1)}=\mb 1$ and $\bar{\mb s}^{(n-1)T}\mb s^{(n-1)}\geq(1-E_R/2)L$. Let $x_l^{(n-1)}=\rho_l e^{j\psi_l}$, then $(\mb B^{(n-1)}\mb s^{(n-1)})_l$, the $l$-th element of $\mb B^{(n-1)}\mb s^{(n-1)}$, should be equal to 1. That is,
\begin{align}
\label{Eq:Bsl}
(\mb B^{(n-1)}\mb s^{(n-1)})_l=&\re\{x_l^{(n-1)}\} \cos(\gamma_l^{(n-1)})+\nonumber\\
&\im\{x_l^{(n-1)}\}\sin(\gamma_l^{(n-1)})\\
=&\rho_l \cos(\psi_l) \cos(\gamma_l^{(n-1)})+\nonumber\\
&\rho_l \sin(\psi_l) \sin(\gamma_l^{(n-1)})\\
=&1
\end{align}
where $\gamma_l^{(n)} = 2 \arg(x_l^{(n-1)}) - \gamma_l^{(n-1)}$. This implies
\be
\rho_l=\frac{1}{\cos(\psi_l) \cos(\gamma_l^{(n-1)})+\sin(\psi_l) \sin(\gamma_l^{(n-1)})}
\ee
Note that $\mb s^{(n-1)}$ belongs to the feasible set of $CP^{(n)}$ if and only if $\mb B^{(n)}\mb s^{(n-1)}=\mb 1$ and $\bar{\mb s}^{(n)T}\mb s^{(n-1)}\geq(1-E_R/2)L$. We have
\begin{align}
% \label{eq:Bsnl}
(\mb B^{(n)}\mb s^{(n-1)})_l=&\rho_l \cos(\psi_l) \cos(\gamma_l^{(n)})\nonumber\\
&+\rho_l \sin(\psi_l) \sin(\gamma_l^{(n)})\\
=&\rho_l \cos(\psi_l - \gamma_l^{(n)}) \label{Eq:gammal}\\
=&\rho_l \cos(\psi_l - 2\psi_l + \gamma_l^{(n-1)}) \\
=&\rho_l \cos(\psi_l - \gamma_l^{(n-1)})\\
=&\rho_l \cos(\psi_l) \cos(\gamma_l^{(n-1)})\nonumber\\
&+ \rho_l \sin(\psi_l) \sin(\gamma_l^{(n-1)})\\
=&1\label{Eq:Proof1}
\end{align}
Note that we used $\gamma_l^{(n)} = 2 \arg(x_l^{(n-1)}) - \gamma_l^{(n-1)}=2\psi_l- \gamma_l^{(n-1)}$ in (\ref{Eq:gammal}). To show $\bar{\mb s}^{(n)T}\mb s^{(n-1)}\geq(1-E_R/2)L$, let $\bar{\mb x}$ denote the complex version of $\mb {\bar s}$, that is, $\mb {\bar s}= [\re \{\bar{\mb x}\}^T \im \{\bar{\mb x}\}^T]^T$. Then we have
\begin{align}
\label{eq:sl}
(1-E_R/2)L&\le \bar{\mb s}^{(n-1)T}\mb s^{(n-1)}\\
&=\re\{\bar{\mb x}^{(n-1)H}\mb x^{(n-1)}\}\\
&=\re\{\sum_l^L \bar{x}_l^{*(n-1)} \rho_l e^{j\psi_l}\}\\
&\leq \Big|\sum_l^L \bar{x}_l^{*(n-1)} \rho_l e^{j\psi_l}\Big|\\
&\leq \sum_l^L \Big|\bar{x}_l^{*(n-1)} \rho_l e^{j\psi_l}\Big|\\
&\leq \sum_l^L \Big|\bar{x}_l^{*(n-1)}\Big| \rho_l\\
&= \sum_l^L  |\bar{x}_l^{*(n-1)}| e^{-j\psi_l} \rho_l e^{j\psi_l}\label{Eq:Lemma1.1}\\
&= \sum_l^L  \bar{x}_l^{*(n)} \rho_l e^{j\psi_l}\label{Eq:Lemma1.2}\\
&= \re\{\bar{\mb x}^{(n)H}\mb x^{(n-1)}\}\\
&= \bar{\mb s}^{(n)T}\mb s^{(n-1)}\label{Eq:Proof2}
\end{align}
Note that the equality between \eqref{Eq:Lemma1.1} and \eqref{Eq:Lemma1.2} holds because we define $\bar{\mb s}^{(n)}$ such that $\arg(\bar{\mb F}^H \bar{\mb y})=\arg(\mb x^{(n-1)})$. Eqs. \eqref{Eq:Proof1} and \eqref{Eq:Proof2} confirm $\mb B^{(n)}\mb s^{(n-1)}=\mb 1$ and $\bar{\mb s}^{(n)T}\mb s^{(n-1)}\geq(1-E_R/2)L$.
\end{proof}

Lemma \ref{Lem:FeasibleSet} proves that the feasible set of each iteration is updated such that it contains the optimal solution of the optimization problem at the previous iteration step. If $|\mb x^{(n)}|= \mb 1$, then the constraints of the next problem $CP^{(n+1)}$ are the same as problem $CP^{(n)}$, which means $\mb x^{(n+1)}= \mb x^{(n)}$ and, hence, the algorithm converges. Lemma \ref{Lem:Converge} further establishes that the cost function sequence is in fact non-increasing and converges. This procedure is visually illustrated in Fig. \ref{Fig:FCpn}.

\begin{figure}[t]
	\centering
	\subfloat[The initial problem $CP^{(1)}$, the initial feasible set is the blue line.]{\includegraphics[scale=0.32]{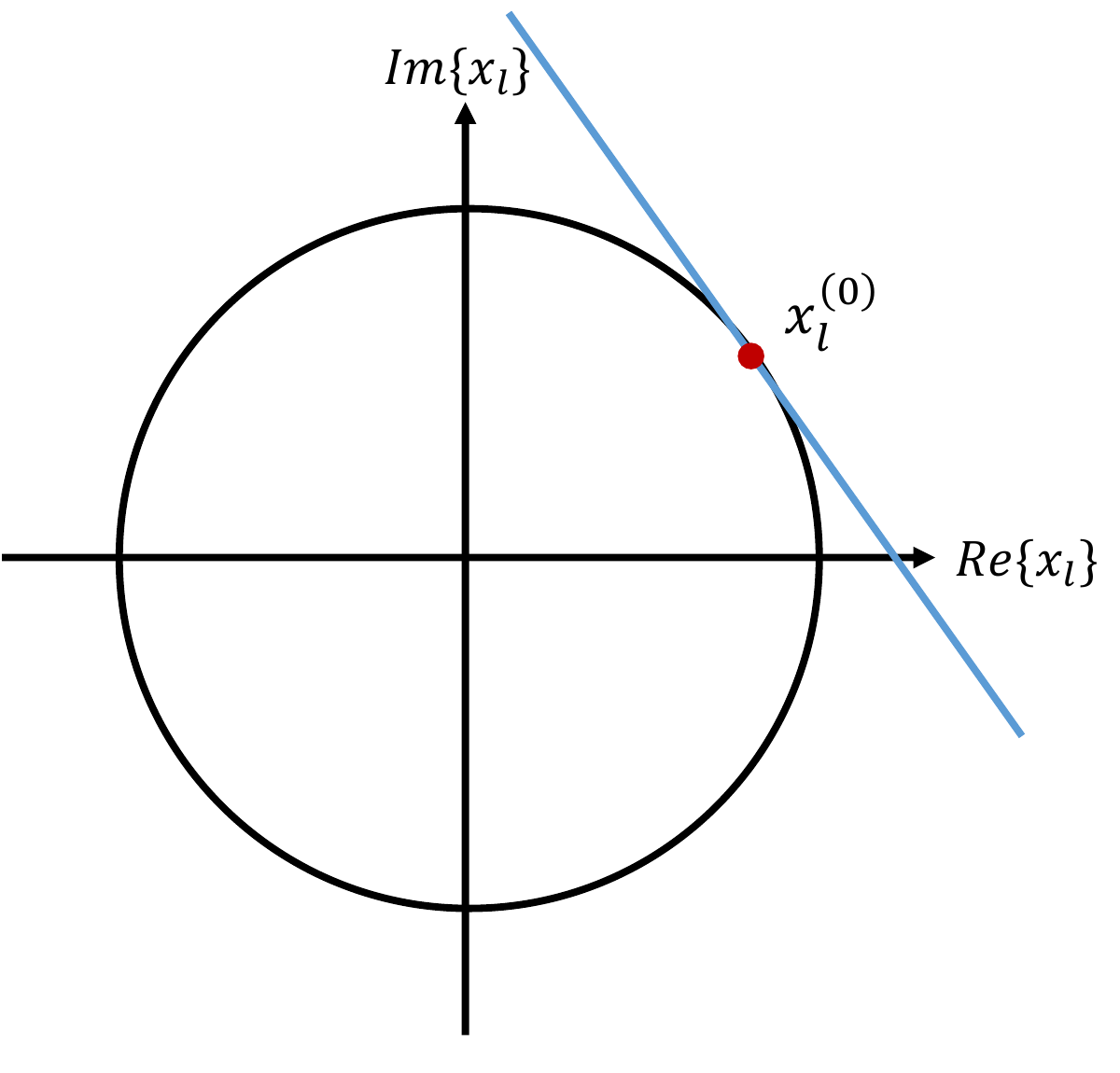}}\quad
	\subfloat[Solution of problem $CP^{(1)}$ lies on the initial feasible set.]{\includegraphics[scale=0.32]{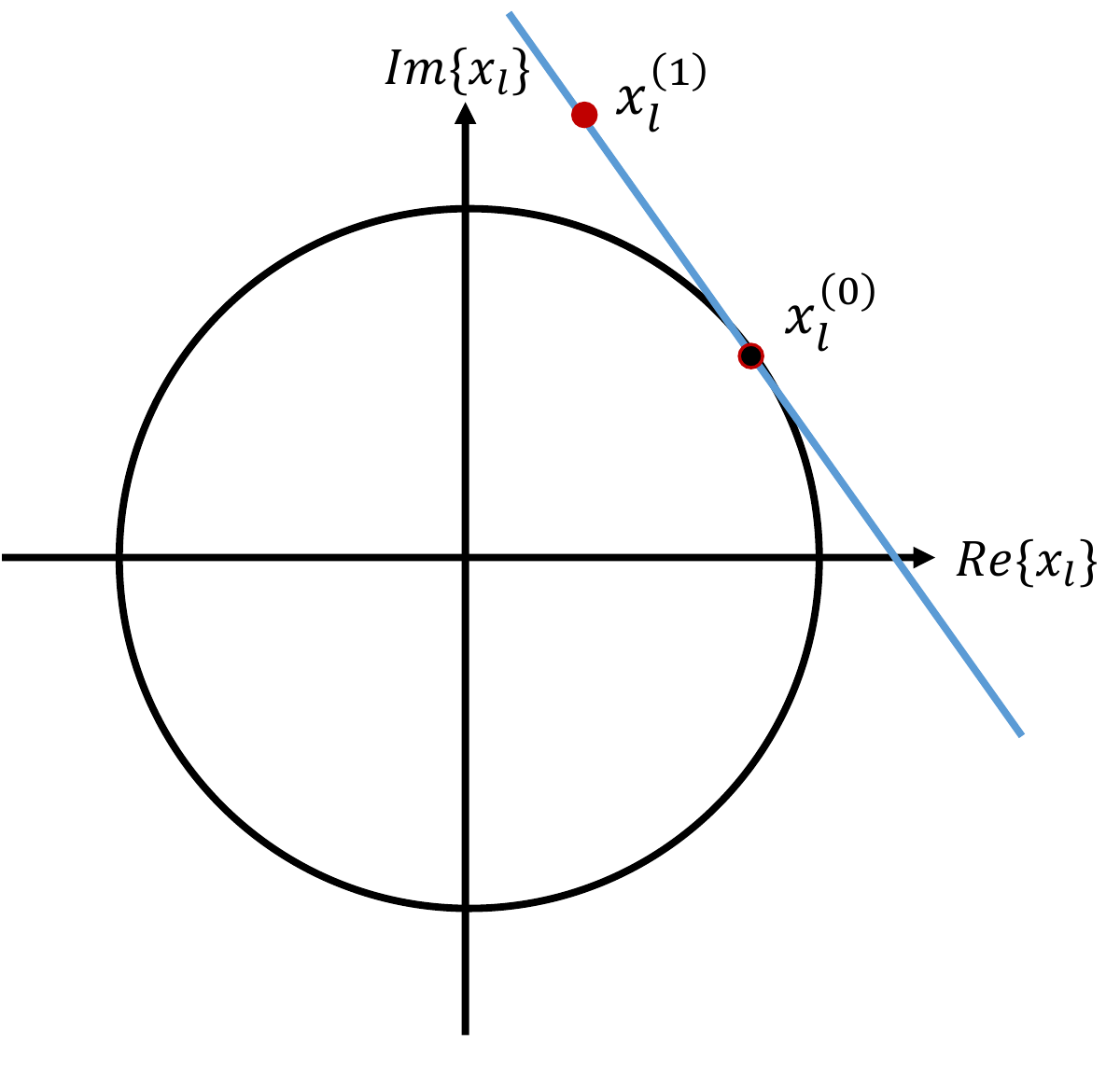}}\\
	\subfloat[The new adjusted feasible set (Contains $x_l^{(1)}$) in blue, the previous feasible set in gray.]{\includegraphics[scale=0.32]{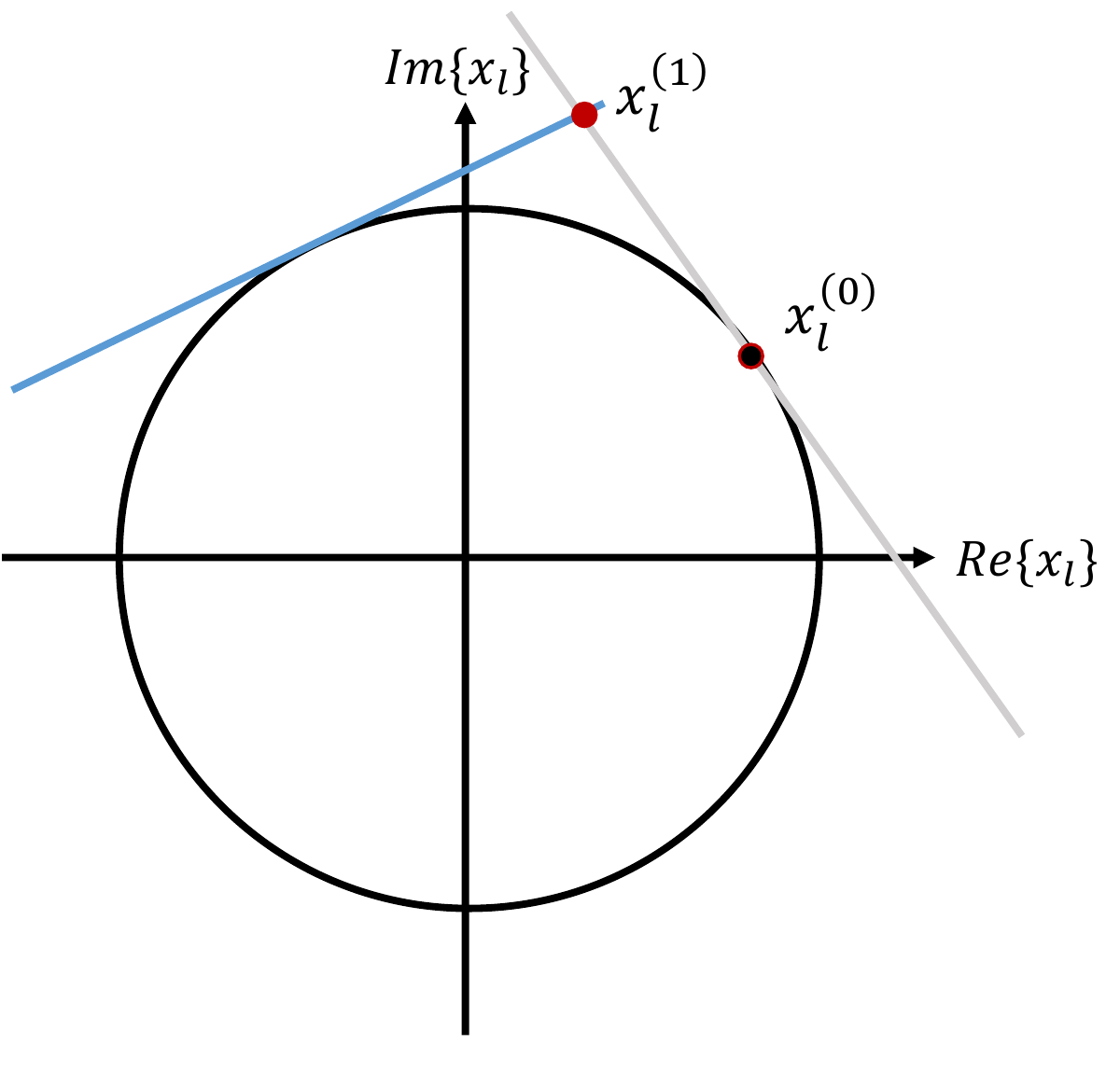}}\quad
	\subfloat[The converged solution now lies on the constant modulus.]{\includegraphics[scale=0.32]{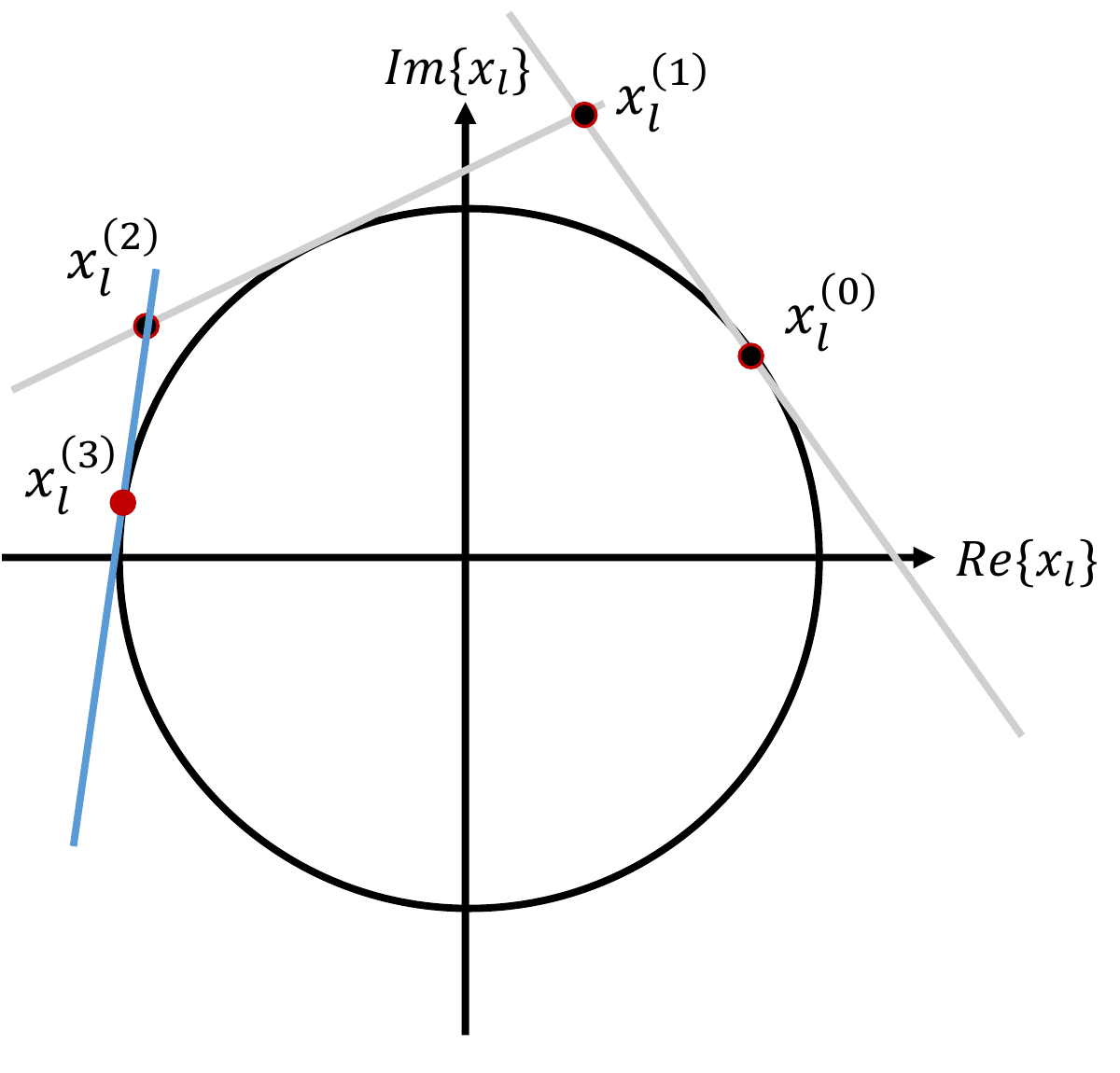}}
	\caption{Illustration of the successive solutions of \eqref{Eq:BPCP} for the $l$-th element of the vector $\mb x^{(n)}$ i.e. $x_l^{(n)}$. The current feasible set is shown via a blue line.\label{Fig:FCpn}}
\end{figure}

Now we focus on how to solve the optimization problem \eqref{Eq:BPCP} at each iteration step. Note that the problem \eqref{Eq:BPCP} is a convex quadratic minimization with linear equality constraints. Using the optimality conditions for problem \eqref{Eq:BPCP}, the sufficient and necessary Karush-Kuhn-Tucker (KKT) conditions \cite{Boyd04} of \eqref{Eq:BPCP} give the following.
\be
\label{Eq:KKT1}
2 (\mb R+\lambda \mb I)  \mb s^{(n)}+\mb B^{(n)T}\mb v^{(n)}-\mu^{(n)}\bar{\mb s}=0
\ee
\be
\label{Eq:KKT1b}
\mb B^{(n)}\mb s^{(n)}= \mb 1
\ee
\be
\label{Eq:KKT2}
\mu^{(n)}\big(\bar{\mb s}^{(n)T}\mb s^{(n)}-(1-E_R/2)L\big)=0
\ee
\be
\label{Eq:KKTic}
\bar{\mb s}^{(n)T}\mb s^{(n)}-(1-E_R/2)L\ge0
\ee
\be
\label{Eq:KKTmu}
\mu^{(n)}\ge0
\ee
We can directly solve these equations to find $\mb s^{(n)}$, $\mb v^{(n)}$ and $\mu^{(n)}$. The complementary slackness condition \eqref{Eq:KKT2} implies that either $\mu^{(n)}=0$ or $\bar{\mb s}^{(n)T}\mb s^{(n)} -(1-E_R/2)L=0$ must be satisfied. In the case of $\mu^{(n)}=0$, from Eqs. \eqref{Eq:KKT1} and \eqref{Eq:KKT1b}, we have
\be
\label{Eq:SCMC}
\begin{bmatrix}
				\bar{\mb R} &  {\mb B^{(n)}}^{T} \\
				\mb B^{(n)} & \mb 0
\end{bmatrix}
\begin{bmatrix}
		\mb s^{(n)}\\
		\mb v^{(n)}
\end{bmatrix}=
\begin{bmatrix}
		\mb 0\\
		\mb 1
\end{bmatrix}
\ee
where $\bar{\mb R}=2(\mb R+\lambda \mb I)$ and $\mb v^{(n)}\in \mathbb{R}^{(L+1)\times 1}$ is the Lagrange multiplier associated with the equality constraints. Solving (\ref{Eq:SCMC}) by block elimination gives
\be
\label{Eq:sl}
\hat{\mb s}^{(n)}=\bar{\mb R}^{-1}{\mb B^{(n)}}^{T}\big(\mb B^{(n)} \bar{\mb R}^{-1} {\mb B^{(n)}}^{T}\big)^{-1}\mb 1
\ee
If $\hat{\mb s}^{(n)}$ satisfies $\bar{\mb s}^{(n)T}\hat{\mb s}^{(n)}-(1-E_R/2)L\ge 0$, then $\mb s^{(n)}=\hat{\mb s}^{(n)}$ is the optimal solution of problem ($CP^{(n)}$). However, if $\bar{\mb s}^{(n)T}\hat{\mb s}^{(n)} -(1-E_R/2)L<0$, then $\hat{\mb s}^{(n)}$ is not the solution since it violates \eqref{Eq:KKTic}. Thus, $\mu^{(n)}=0$ can not be valid and, therefore, it is the case that $\bar{\mb s}^{(n)T}\mb s^{(n)} -(1-E_R/2)L=0$ must holds. In this case, the KKT conditions \eqref{Eq:KKT1} through \eqref{Eq:KKT2} are given in the matrix form by
\be
\label{Eq:KKTss}
{\begin{bmatrix}
		\bar{\mb R} & {\mb B^{(n)}}^{T} & -\bar{\mb s}^{(n)}\\
		\mb B^{(n)} & \mb 0 & \mb 0\\
		-\bar{\mb s}^{(n)T} & \mb 0 & \mb 0
\end{bmatrix}}
{\begin{bmatrix}
		{\mb s}^{(n)}\\
		{\mb v}^{(n)}\\
		{\mu}^{(n)}
\end{bmatrix}}=
{\begin{bmatrix}
		\mb 0\\
		\mb 1\\
		-(1-E_R/2)L
\end{bmatrix}}
\ee
Using block elimination to solve \eqref{Eq:KKTss} gives
\be
\label{Eq:sl2}
\mb s^{(n)}={\mu}^{(n)}\bar{\mb R}^{-1}(\mb I-{\mb B^{(n)}}^{T}\hat{\mb R}\mb B^{(n)} \bar{\mb R}^{-1})\bar{\mb s}^{(n)}+\hat{\mb s}^{(n)}
\ee
where
\be
\hat{\mb R}=\big(\mb B^{(n)} \bar{\mb R}^{-1} {\mb B^{(n)}}^{T}\big)^{-1}
\ee
\be
\label{Eq:KKTmu2}
{\mu}^{(n)}=\frac{1}{\alpha^{(n)}}\big(\bar{\mb s}^{(n)T}\hat{\mb s}^{(n)}-(1-E_R/2)L\big)
\ee
\be
\label{Eq:Alpha}
\alpha^{(n)}=-{\begin{bmatrix}
		\bar{\mb s}^{(n)}\\
		\mb 0
	\end{bmatrix}^T
	\begin{bmatrix}
		\bar{\mb R} & {\mb B^{(n)}}^{T}\\
		\mb B^{(n)} & \mb 0
\end{bmatrix}}^{-1}
{\begin{bmatrix}
		\bar{\mb s}^{(n)}\\
		\mb 0
\end{bmatrix}}
\ee
Note that \eqref{Eq:KKTic} always holds since $\bar{\mb s}^T\mb s^{(n)} -(1-E_R/2)L=0$ in this case. To confirm all KKT conditions are satisfied, we have to show the dual feasibitity condition \eqref{Eq:KKTmu} holds. The following lemma proves this.

\begin{lemma}
	\label{Lemma:ps}
	If $\bar{\mb s}^T\hat{\mb s}^{(n)} -(1-E_R/2)L<0$ then ${\mu}^{(n)}> 0$.
\end{lemma}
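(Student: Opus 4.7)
My plan is to argue by contradiction using the KKT optimality framework already set up. Problem $CP^{(n)}$ is a convex quadratic program with positive definite Hessian (since $\bar{\mb R}=2(\mb R+\lambda\mb I)\succ 0$ for $\lambda>0$) and linear equality and affine inequality constraints; since all constraints are affine, Slater's condition holds trivially and the system (\ref{Eq:KKT1})--(\ref{Eq:KKTmu}) is both necessary and sufficient for optimality. Lemma \ref{Lem:FeasibleSet} guarantees that $CP^{(n)}$ is feasible, so an optimal primal--dual triple $(\mb s^{(n)},\mb v^{(n)},\mu^{(n)})$ exists and satisfies all five KKT conditions.

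Now suppose for contradiction that $\mu^{(n)}=0$. Then the stationarity relation (\ref{Eq:KKT1}) together with the equality constraint (\ref{Eq:KKT1b}) collapses precisely to the reduced system (\ref{Eq:SCMC}), whose unique solution in $\mb s$ is $\hat{\mb s}^{(n)}$ as given by (\ref{Eq:sl}). For the triple $(\hat{\mb s}^{(n)},\mb v^{(n)},0)$ to constitute a valid KKT point, however, primal feasibility (\ref{Eq:KKTic}) must hold at this candidate, i.e., $\bar{\mb s}^{(n)T}\hat{\mb s}^{(n)}-(1-E_R/2)L\ge 0$. This directly contradicts the hypothesis $\bar{\mb s}^{(n)T}\hat{\mb s}^{(n)}-(1-E_R/2)L<0$, so $\mu^{(n)}\ne 0$. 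Combined with dual feasibility $\mu^{(n)}\ge 0$ in (\ref{Eq:KKTmu}), this yields $\mu^{(n)}>0$.

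An alternative, purely algebraic route would go through the closed-form expression (\ref{Eq:KKTmu2}): since the numerator is negative by hypothesis, it suffices to prove $\alpha^{(n)}<0$. Expanding the inverse in (\ref{Eq:Alpha}) by Schur complement gives $\alpha^{(n)}=-\bar{\mb s}^{(n)T}\bigl(\bar{\mb R}^{-1}-\bar{\mb R}^{-1}\mb B^{(n)T}(\mb B^{(n)}\bar{\mb R}^{-1}\mb B^{(n)T})^{-1}\mb B^{(n)}\bar{\mb R}^{-1}\bigr)\bar{\mb s}^{(n)}$, which equals $-\|(\mb I-P)\bar{\mb R}^{-1/2}\bar{\mb s}^{(n)}\|^{2}$ with $P$ the orthogonal projector onto $\mathrm{range}(\bar{\mb R}^{-1/2}\mb B^{(n)T})$; this immediately yields $\alpha^{(n)}\le 0$. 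The genuine obstacle on this path is ruling out the degenerate case $\alpha^{(n)}=0$, which would require $\bar{\mb s}^{(n)}\in\mathrm{range}(\mb B^{(n)T})$ and would demand a delicate inspection of the specific structure of $\bar{\mb s}^{(n)}$ in (\ref{Eq:sbarn}) against the $L+1$ tangent-line rows of $\mb B^{(n)}$. The contradiction route I take first sidesteps this entirely by leveraging existence of the optimum supplied by Lemma \ref{Lem:FeasibleSet}, and is therefore the cleaner argument.
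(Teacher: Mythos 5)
Your main argument is correct, and it takes a genuinely different route from the paper. The paper proves the lemma \emph{constructively}: it shows that $\alpha^{(n)}\neq 0$ (by arguing that if $\bar{\mb s}^{(n)}$ were linearly dependent on the rows of $\mb B^{(n)}$ while $\bar{\mb s}^{(n)T}\hat{\mb s}^{(n)}-(1-E_R/2)L<0$, then $CP^{(n)}$ would be infeasible, contradicting Lemma \ref{Lem:FeasibleSet}), then shows $\alpha^{(n)}\le 0$ via exactly the idempotent-projector identity you sketch, concludes $\alpha^{(n)}<0$, and reads off ${\mu}^{(n)}>0$ from \eqref{Eq:KKTmu2} as a ratio of two negative numbers. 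Your primary route is instead \emph{existential}: feasibility (Lemma \ref{Lem:FeasibleSet}) plus coercivity of the strictly convex objective guarantees an optimum, affine constraints guarantee a KKT point there, and any KKT point with ${\mu}^{(n)}=0$ would force $\mb s^{(n)}=\hat{\mb s}^{(n)}$ (unique since $\bar{\mb R}\succ 0$), violating \eqref{Eq:KKTic} under the hypothesis. This is cleaner and avoids the eigenvalue/Schur-complement machinery. Two caveats. First, a minor one: the constraint qualification you want is the linearity (affine) CQ, not Slater --- Lemma \ref{Lem:FeasibleSet} only gives $\bar{\mb s}^{(n)T}\mb s^{(n-1)}\ge(1-E_R/2)L$, not strict inequality, so Slater as usually stated need not hold; for affine constraints KKT is necessary regardless, so your conclusion stands. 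Second, and more substantively: in context the lemma certifies the sign of the \emph{specific quantity computed by} \eqref{Eq:KKTmu2}, which presupposes $\alpha^{(n)}\neq 0$; your contradiction route does not fully sidestep that issue, it only defers it. The repair is short --- the third row of \eqref{Eq:KKTss} forces ${\mu}^{(n)}\alpha^{(n)}=\bar{\mb s}^{(n)T}\hat{\mb s}^{(n)}-(1-E_R/2)L$, whose right-hand side is nonzero by hypothesis, so $\alpha^{(n)}\neq 0$ follows from solvability of the KKT system and \eqref{Eq:KKTmu2} is well posed --- but it should be stated, since well-posedness of the closed form is precisely what the algorithm in step (4) of Algorithm \ref{Alg:BIC} relies on.
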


\begin{proof}
First, let
\be
\mb K=
{\begin{bmatrix}
		\bar{\mb R} & {\mb B^{(n)}}^{T} & -\bar{\mb s}^{(n)}\\
		\mb B^{(n)} & \mb 0 & \mb 0\\
		-\bar{\mb s}^{(n)T} & \mb 0 & \mb 0
\end{bmatrix}}
\ee
\be
\mb K_{11}=
{\begin{bmatrix}
		\bar{\mb R} & {\mb B^{(n)}}^{T}\\
		\mb B^{(n)} & \mb 0
\end{bmatrix}}
\ee
If $\bar{\mb s}^{(n)}$ is linearly dependent on $\mb b^{(n)}_{1}, \mb b^{(n)}_{2} , \ldots, \mb b^{(n)}_{L+1}$ and $\bar{\mb s}^T\hat{\mb s}^{(n)} -(1-E_R/2)L<0$, then there will be no solution to $CP^{(n)}$ which contradicts Lemma \ref{Lem:FeasibleSet}. Therefore, $\mb b^{(n)}_{1}, \mb b^{(n)}_{2}, \ldots, \mb b^{(n)}_{L+1}$, and $\bar{\mb s}$ must be linearly independent. Moreover, since $\bar{\mb R}$ is positive definite, all the eigenvalues of $\mb K$ are nonzero according to Theorem 2.1 in \cite{Higham98}, which means $\mb K$ is nonsingular. Since $\mb K$ is nonsingular, the Schur complement of the block $\mb K_{11}$ in $\mb K$ is also nonsingular (nonzero in our case) according to Section C.4 in \cite{Boyd04} and equals to $\alpha^{(n)}$. This implies
\be
\label{Eq:alphanz}
\alpha^{(n)}\ne 0
\ee
Using the block inverse to the matrix $\mb K_{11}$, Eq. \eqref{Eq:Alpha} can be rewritten as
\begin{align}
\alpha^{(n)} &=-\bar{\mb s}^{(n)T}(\bar{\mb R}^{-1}-\bar{\mb R}^{-1}{\mb B^{(n)}}^{T}\hat{\mb R}\mb B^{(n)} \bar{\mb R}^{-1})\bar{\mb s}^{(n)}\\
&=-\bar{\mb s}^{(n)T}\bar{\mb R}^{-\frac{1}{2}}(\mb I-\bar{\mb R}^{-\frac{1}{2}}{\mb B^{(n)}}^{T}\hat{\mb R}\mb B^{(n)} \bar{\mb R}^{-\frac{1}{2}})\bar{\mb R}^{-\frac{1}{2}}\bar{\mb s}^{(n)}\\
&=-\mb y^T(\mb I-\bar{\mb R}^{-\frac{1}{2}}{\mb B^{(n)}}^{T}\big(\mb B^{(n)} \bar{\mb R}^{-1} {\mb B^{(n)}}^{T}\big)^{-1}\mb B^{(n)} \bar{\mb R}^{-\frac{1}{2}})\mb y\\
&=-\mb y^T(\mb I-\mb C(\mb C^T \mb C)^{-1}\mb C^T)\mb y
\end{align}where $\mb y=\bar{\mb R}^{-\frac{1}{2}}\bar{\mb s}^{(n)}$ and $\mb C=\bar{\mb R}^{-\frac{1}{2}}{\mb B^{(n)}}^{T}$. Note that $\mb C_p=\mb C(\mb C^T \mb C)^{-1}\mb C^T$ is an idempotent matrix with eigenvalues of either $0$ or $1$ \cite{Horn12}. This implies that $(\mb I-\mb C_p)$ is positive semidefinite. Therefore,
\be
\label{Eq:alphap}
\alpha^{(n)}\le 0
\ee
Combining (\ref{Eq:alphanz}) and (\ref{Eq:alphap}) implies that $\alpha^{(n)}<0$ and, hence,  ${\mu}^{(n)}> 0$. 
\end{proof}

The process of solving \eqref{Eq:BPMatrixform} for fixed $\{\phi_{kp}\}$ is given in Algorithm \ref{Alg:BIC}. Note that both cases lead to the closed form solutions. The complete BIC algorithm to solve \eqref{Eq:BPO} (including iteration of $\mb x$ and $\{\phi_{kp}\}$) is given in Algorithm \ref{Alg:BIC2}. 
%The structure of Algorithm \ref{Alg:BIC} bears a high level conceptual similarity to our recent work in beampattern design \cite{Aldayel16Radarcon} since the solution in \cite{Aldayel16Radarcon} also employs a sequence of convex problems approach. However, the way to update the feasible set is different so that the feasible set of the next step should include the optimal solution of the previous step. Furthermore, the BIC can deal with the spectral interference constraint as well while \cite{Aldayel16Radarcon} does not.

\begin{algorithm}[!t]
	\caption{Successive algorithm to solve \eqref{Eq:BPMatrixform}}
	\label{Alg:BIC}
	\begin{algorithmic}
		\State \tb{Inputs: } $\mb d_p$, $\mb W_p$, $\mb a_{kp}$ for $p= -\frac{N}{2},...,0, ..., \frac{N}{2}-1$, $k=1, 2, .., K$ and $\zeta$ (the stopping threshold).
		\State \tb{Output: } A solution $\mb x^{\star}$ for problem \eqref{Eq:BPMatrixform}.
		% \State (1) Solve the QCQP of $RC^{(0)}$ using SOCP.
		\State (1) Set $n=1$ and an initial value for $\mb x^{(0)}$.
		\State (2) Compute $\mb B^{(n)}=[\mb b^{(n)}_{1}, \mb b^{(n)}_{2} , ..., \mb b^{(n)}_{L+1}]^T $ as in (\ref{Eq:bnn}).
		% \\(The computational complexity of this step is $\mathcal{O}({M}^{3.5}{N}^{3.5})$)
		\State (3) Compute $\hat{\mb s}^{(n)}$ via eq. (\ref{Eq:sl}) and $\bar{\mb s}^{(n)}$  via eq. (\ref{Eq:sbarn}).
		\State (4) Check the following:
		\If{ $\bar{\mb s}^{{(n)}T}\hat{\mb s}^{(n)}-(1-E_R/2)L\ge 0$}
		\State $\mb s^{(n)}=\hat{\mb s}^{(n)}$.
		\Else
		\State $\mb s^{(n)}={\mu}^{(n)}\bar{\mb R}^{-1}(\mb I-{\mb B^{(n)}}^{T}\hat{\mb R}\mb B^{(n)} \bar{\mb R}^{-1})\bar{\mb s}^{(n)}+\hat{\mb s}^{(n)}$\\
		where ${\mu}^{(n)}$ is defined in (\ref{Eq:KKTmu2}).
		\EndIf
		
		% \State (5) Set $x_{l(n)}=\exp\{j \arg(s_l^{(n)}+j s_{l+L}^{(n)})\}$, $l=1, 2, ..., L$.
		% \\(The computational complexity of this step is $\mathcal{O}({M}^{2}{N}^{2})$ \cite{golub2012matrix})
		\State (5) Construct $\mb x^{(n)}$ where  $x_l^{(n)}= s_l^{(n)}+j s_{l+L}^{(n)}$ for $l=1,...,L$. Check the following:
		\If{ $\sum_{p} \|\mb d_p-\mb A_p \mb W_p\mb x^{(n)}\|_2^2-\sum_{p} \|\mb d_p-\mb A_p \mb W_p\mb x^{(n-1)}\|_2^2<\zeta$}
		\State STOP.
		\Else
		\State set $n=n+1$ GOTO step (2).
		\EndIf
		\State \tb{Output: } $\mb x^{\star}=\exp\{j \arg(\mb x^{(n)})\}$.
		% \State \tb{Output: } $\mb x^{\star}=$ constant modulus version of $\mb s^{(n)}$
		%\tb{Output: } $\mb x^{\star}=\exp\{j \arg(s_l^{(n)}+j s_{l+L}^{(n)})\}$, $l=1, 2, ..., L$
		
	\end{algorithmic}
\end{algorithm}

\noindent \tb{Computational Complexity:} Based on the computational cost of solving (\ref{Eq:KKTss}) in each iteration, the overall computational complexity of BIC is $\mathcal{O}(F{L}^{2.373})-\mathcal{O}(F{L}^{3})$ \cite{Williams12} where $F$ is the total number of iterations.% In comparison, SDR with randomization has a computational complexity of $\mathcal{O}({L}^{3.5})+\mathcal{O}(T{L}^{2})$ \cite{Aubry12IET} where $T$ is the number of randomization trails. It invariably needs a large number of randomization trials $T>>L$ \cite{Cui14TSP} which makes the term  $\mathcal{O}(T{L}^{2})$ very significant. Therefore, the BIC algorithm has much lower complexity.

% \ti{Remark:} Let $\mb x^{(n-1)}$ be the complex version of the optimal solution of $CP^{(n-1)}$,  i.e. $\mb s^{(n-1)} = [\re (\mb x^{(n-1)T}) \im (\mb x^{(n-1)T})]^T$. The affine constraints of $CP^{(n)}$ are adjusted so that the feasible set of $CP^{(n)}$ includes what we call as the {\em constant modulus version} of $\mb x^{(n-1)}$ given by $\mb x_{(n-1)}=\exp(j\arg(\mb x^{(n-1)}))/\sqrt{MN}$.  If $\mb x^{(n)}= \mb x_{(n-1)}$, then the constraints of the next problem $CP^{(n+1)}$ are the same as problem $CP^{(n)}$ which means $\mb x^{(n+1)}= \mb x^{(n)}$ and, hence, the algorithm converges. Otherwise, the feasible set is adapted to include the constant modulus versions of $\mb x^{(n)}$. Convergence is then guaranteed by Lemma \ref{Lemma:ni} which establishes that the SINR sequence that results by using the constant modulus version of the solution at each iteration, is in fact non-increasing and converges.

\noindent \textbf{Convergence Analysis:} The value of the objective function of the problem \eqref{Eq:BPMatrixform} as a function of the  $\mb x^{(n)}$, i.e.\ the optimal solution of the QP at iteration $n$, is non-increasing in $n$. This is proven next.
\begin{lemma}
	\label{Lem:Converge}
	Define $g(\mb s) = \mb s^T (\mb R+\lambda \mb I) \mb s$. Then
	\be
	g(\mb s^{(n-1)}) \ge g(\mb s^{(n)})
	\ee
	In other words, the sequence $\{g(\mb s^{(n)})\}_{n=0}^{\infty}$ is non-increasing. Moreover, the sequence $\{g(\mb s^{(n)})\}_{n=0}^{\infty}$ converges to a finite value $g^{\star}$.
\end{lemma}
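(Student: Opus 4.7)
The plan is to deduce the monotonicity inequality directly from the combination of Lemma~\ref{Lem:FeasibleSet} and the optimality of $\mb s^{(n)}$ for $CP^{(n)}$, and then obtain convergence by invoking the monotone convergence theorem on a non-increasing sequence that is bounded below. Both ingredients are essentially already in place from earlier in the paper, so the proof will be short.

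For the inequality $g(\mb s^{(n-1)}) \ge g(\mb s^{(n)})$, I would first apply Lemma~\ref{Lem:FeasibleSet}, which guarantees that $\mb s^{(n-1)}$ belongs to the feasible set of problem $CP^{(n)}$. By construction, $\mb s^{(n)}$ is the minimizer of the objective $g(\mb s) = \mb s^T(\mb R + \lambda \mb I)\mb s$ over that feasible set, so any other feasible point attains a value no smaller than $g(\mb s^{(n)})$. Taking $\mb s^{(n-1)}$ as the feasible competitor yields $g(\mb s^{(n-1)}) \ge g(\mb s^{(n)})$, so that the sequence $\{g(\mb s^{(n)})\}_{n=0}^{\infty}$ is non-increasing.

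For the convergence statement, I would observe that the sequence is bounded below by zero. From the derivation immediately preceding problem \eqref{Eq:BPreal}, the paper establishes
\begin{equation*}
\mb s^T \mb R \mb s \;=\; \sum_{p}\|\mb d_p - \mb A_p \mb W_p \mb x\|_2^2 \;\ge\; 0,
\end{equation*}
so $\mb R$ is positive semi-definite; since $\lambda \ge 0$, the matrix $\mb R + \lambda \mb I$ is positive semi-definite as well, and hence $g(\mb s^{(n)}) \ge 0$ for every $n$. A non-increasing sequence bounded below converges by the monotone convergence theorem to a finite limit $g^{\star} \ge 0$, which is exactly the claim.

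There is no real obstacle here, since the technical work was absorbed into Lemma~\ref{Lem:FeasibleSet} and into the positive semi-definiteness of $\mb R$. The only subtlety is to recognize that the argument proceeds inductively: the inclusion of $\mb s^{(n-1)}$ in the feasible set of $CP^{(n)}$ holds for every $n \ge 1$, so monotonicity is inherited along the entire sequence rather than only at a single step, which is precisely what the monotone convergence theorem requires.
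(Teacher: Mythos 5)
Your proof is correct and follows essentially the same route as the paper: invoke Lemma~\ref{Lem:FeasibleSet} to place $\mb s^{(n-1)}$ in the feasible set of $CP^{(n)}$, use optimality of $\mb s^{(n)}$ to get the inequality, and apply the monotone convergence theorem to the non-increasing sequence bounded below by zero. Your added justification that $g \ge 0$ via positive semi-definiteness of $\mb R + \lambda \mb I$ is a slightly more explicit version of the paper's one-line remark, but the argument is the same.
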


\begin{proof}
Denote the feasible sets of $CP^{(n-1)}$ and $CP^{(n)}$ by $\mathcal{F}_{n-1}$ and $\mathcal{F}_{n}$, respectively. From Lemma \ref{Lem:FeasibleSet}, $\mb s^{(n-1)} \in \mathcal{F}_{n}$. Since $CP^{(n)}$ is a convex problem and $\mb s^{(n)}$ is the optimal solution of $CP^{(n)}$,
\be
\label{eq:seqProb}
\mb s^{(n-1)T} (\mb R+\lambda \mb I) \mb s^{(n-1)} \ge \mb s^{(n)T} (\mb R+\lambda \mb I) \mb s^{(n)}
\ee
Therefore, the sequence $\{g(\mb s^{(n)})\}_{n=0}^{\infty}$ is non-increasing. Since $g(\mb s)\ge 0$ for all values of $\mb s$, it is bounded below. Hence, it converges to a finite value $s^{\star}$ according to the monotone convergence theorem \cite{Royden10}.
\end{proof}

\begin{figure}
	\centering
	\includegraphics[scale=0.33]{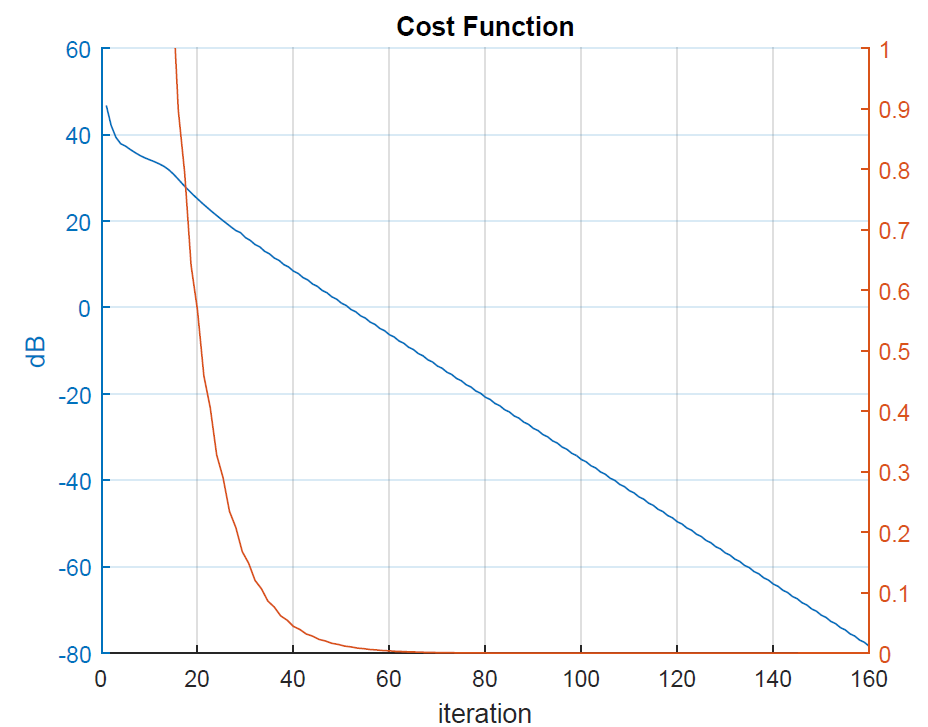}
	\caption{Value of cost function vs. iteration (red curve for the linear scale and blue curve for the log scale).\label{Fig:Nonincreasing}}
\end{figure}

Fig. \ref{Fig:Nonincreasing} verifies the cost function is non-increasing and converges. We plot the cost function in dB (blue line) and actual values (red line). The blue and red lines clearly show the non-increasing property and convergence of the proposed algorithm, respectively.

\subsection{Special Case: Nullforming Beampattern Design}
\label{Sec:NFB}

Null forming beampattern design can be seen as a special case of our full beampattern design. However, unlike the problem formulation in (\ref{Eq:BP}), the goal of null forming beampattern design is to form a beampattern with nulls in desired directions denoted by $\{ \theta_k\}^K_{k=1}$. Here, the objective function can be defined by

\begin{align}
f(\mb x) &= \sum_{p=-\frac{N}{2}}^{\frac{N}{2}-1} \|\mb A_p\mb W_p \mb x\|_2^2\\
% &=\sum_{k=1}^K \sum_{p=-\frac{N}{2}}^{\frac{N}{2}-1} |\mb a^H_{kp} \mb W_p \mb x|^2\\
&=\mb x^H \mb V \mb x
\end{align}
where $\mb V$ is expressed as
\be
\mb V=\sum_{p=-\frac{N}{2}}^{\frac{N}{2}-1} \mb W_p^H \mb A_p^H \mb A_p\mb W_p
\ee
Therefore, the minimization problem can be formulated as
\be
\label{Eq:BPNF}
\left\{ \begin{array}{cc}
	\displaystyle
	\min_{\mb x} &\mb x^H  \mb V \mb x\\
	%        \text{subject to:  } & |\mb x(k)|^2\leq 1/(MN) \\
	\text{s.t.:  } & |x_m(n)|=1, \text{ for }m=1, 2,\ldots, M \text{ and }\\
	 & \quad \quad \quad \quad n=0, 1,\ldots, N-1\\
	& \|\bar{\mb F}^H \bar{\mb y}-\mb x\|_2^2 \le E_R\\
\end{array} \right.
\ee
In this case, the optimization problem reduces to problem $CP^{(n)}$ in \eqref{Eq:BPCP} with $\mb R$ and $\mb s$ redefined as:
\be
\label{Eq:R_NB}
\mb R={\begin{bmatrix}
		\re \{\mb V\}   &-\im \{\mb V\}\\
		\im \{\mb V\}   &\re \{ \mb V\}\\
\end{bmatrix}}
\ee
\be
\mb s={\begin{bmatrix}
		\re \{\mb x\} \\
		\im \{\mb x\}  \\
\end{bmatrix}}
\ee
Since $\mb V$ is positive semi-definite and there are no linear terms in the objective function (i.e. $\mb q=\mb 0$ and $r= 0$ ), then all the lemmas in Section \ref{Sec:SCF} hold. Note that, in this case, we use only Algorithm \ref{Alg:BIC} with $\mb R$ as mentioned above.

\begin{algorithm}[!t]
	\caption{Beampattern optimization with spectral Interference control (BIC)}
	\label{Alg:BIC2}
	\begin{algorithmic}
		\State \tb{Inputs: } $d_{kp}$, $\mb W_p$, $\mb a_{kp}$, for $p= -\frac{N}{2},...,0, ..., \frac{N}{2}-1$, $k=1, 2, .., K$ and $\zeta$ (the desired threshold value).
		\State \tb{Output: } A solution $\mb x^{\star}$ for problem (\ref{Eq:BP}).
		% \State (1) Solve the QCQP of $RC^{(0)}$ using SOCP.
		\State (1) Set $m=1$.
		\State (2) Set $\phi^{(m)}_{kp}=\arg\{\mb a^H_{kp} \mb W_p \mb x^{(m-1)}\}$ for all $k$ and $p$.
		% \\(The computational complexity of this step is $\mathcal{O}({M}^{3.5}{N}^{3.5})$)
		\State (3) Set $\mb d^{(m)}_p=[d_{1p} e^{j\phi^{(m)}_{1p}}, ...,d_{Kp} e^{j\phi^{(m)}_{Kp}}]^T$.
		\State (4) Use Algorithm \ref{Alg:BIC} to compute $\mb x^{(m)}$ with $\mb d_p=\mb d^{(m)}_p$ and $\mb x^{(0)} (\text{in Algo. 1})=\mb x^{(m-1)}$ as inputs.
		% \\(The computational complexity of this step is $\mathcal{O}({M}^{2}{N}^{2})$ \cite{golub2012matrix})
		\State (5) Check the following:
		% \If{ $\sum_{p} \||\mb d_p|-|\mb A_p \mb W_p\mb x^{(m)}|\|_2^2-\sum_{p} \||\mb d_p|-|\mb A_p \mb W_p\mb x^{(m-1)}|\|_2^2<\zeta$}
		\If{ $f'(\mb x^{(m)})-f'(\mb x^{(m-1)})<\zeta$ where $f'(\mb x)=\sum_{p} \||\mb d_p|-|\mb A_p \mb W_p\mb x|\|_2^2$}
		\State STOP.
		\Else
		\State set $m=m+1$ GOTO step (2).
		\EndIf
		\State \tb{Output: }  $\mb x^{\star}=\exp\{j \arg(\mb x^{(m)})\}$.
		
	\end{algorithmic}
\end{algorithm}

%\begin{figure*}[!t]
%	\begin{center}
%		\subfloat[]{\includegraphics[scale=0.6]{figs/costfunction_lambda.pdf}\label{Fig:CR_CostFunction}}\hfil
%		\subfloat[]{\includegraphics[scale=0.6]{figs/energy_lambda.pdf}\label{Fig:CR_Energy}}\\
%	\end{center}
%	\caption{Rate of Convergence as a function of $\lambda$}\label{Fig:ConvergenceRate}
%\end{figure*}

\begin{figure*}[!t]
	\begin{center}
		\subfloat[beampattern vs. angle]{\includegraphics[scale=0.67]{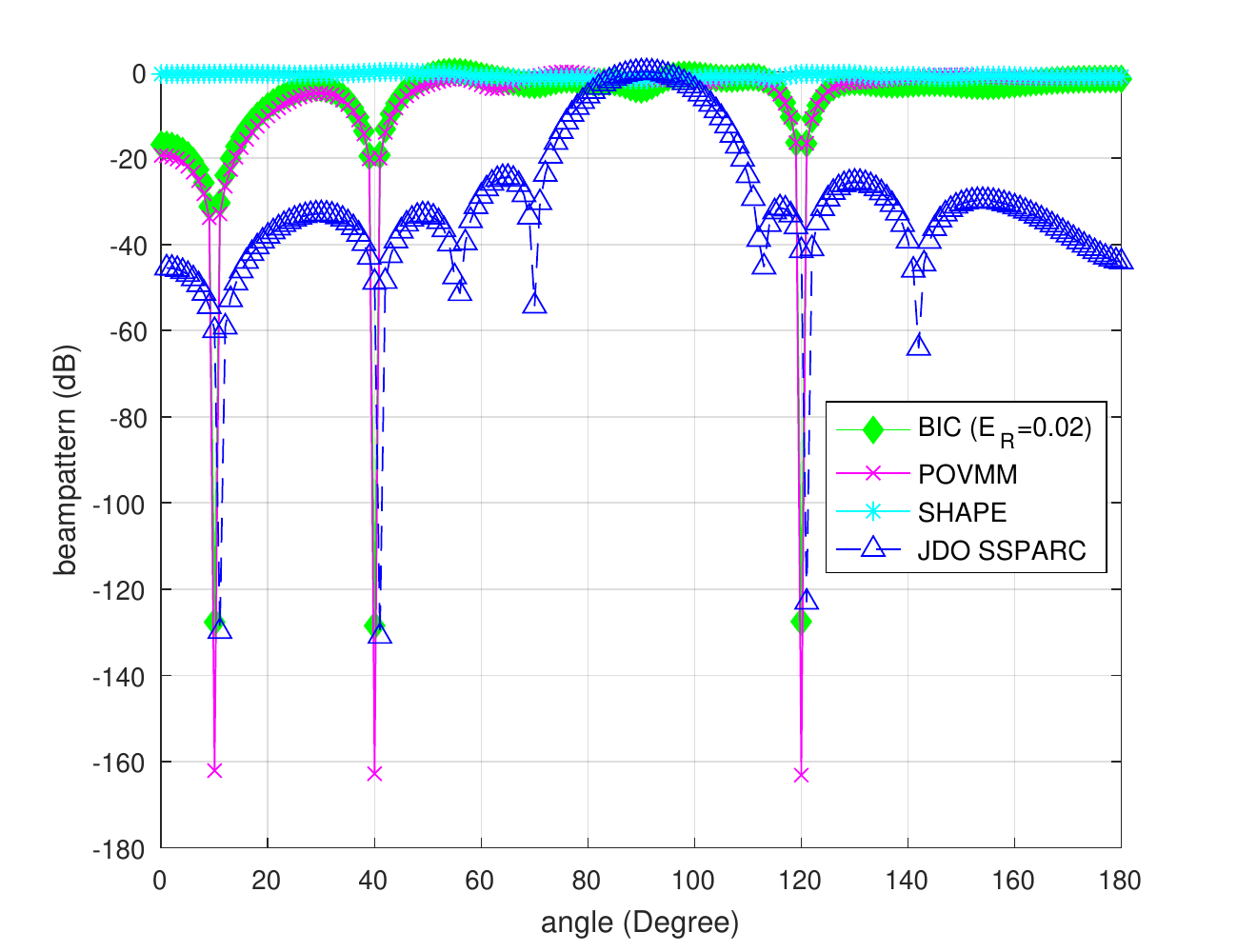}\label{Fig:CBICPO}}\hfil
		\subfloat[spectrum vs. frequency]{\includegraphics[scale=0.6]{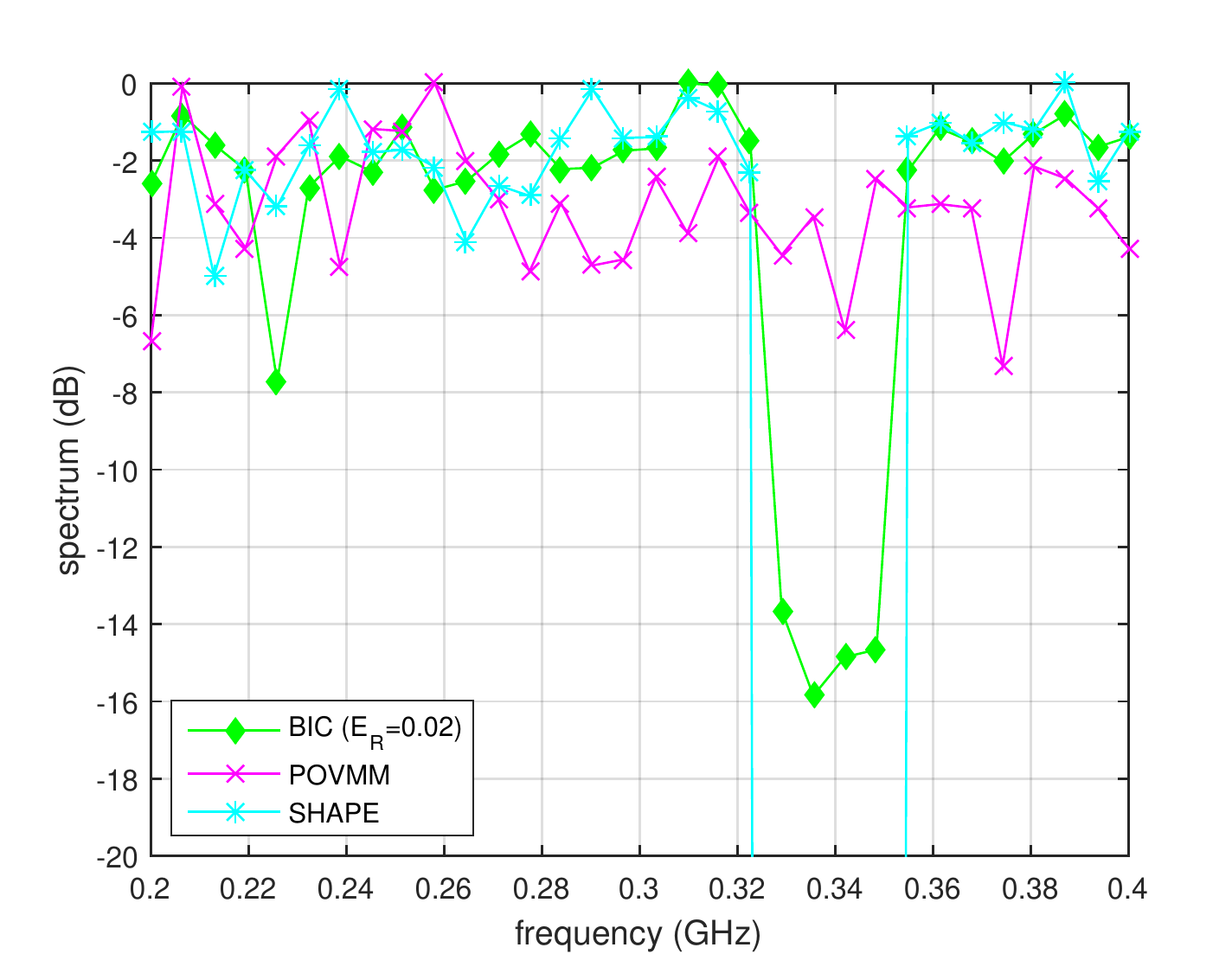}\label{Fig:CBICPOS}}\\
		\subfloat[spectrum in TV bands]{\includegraphics[scale=0.6]{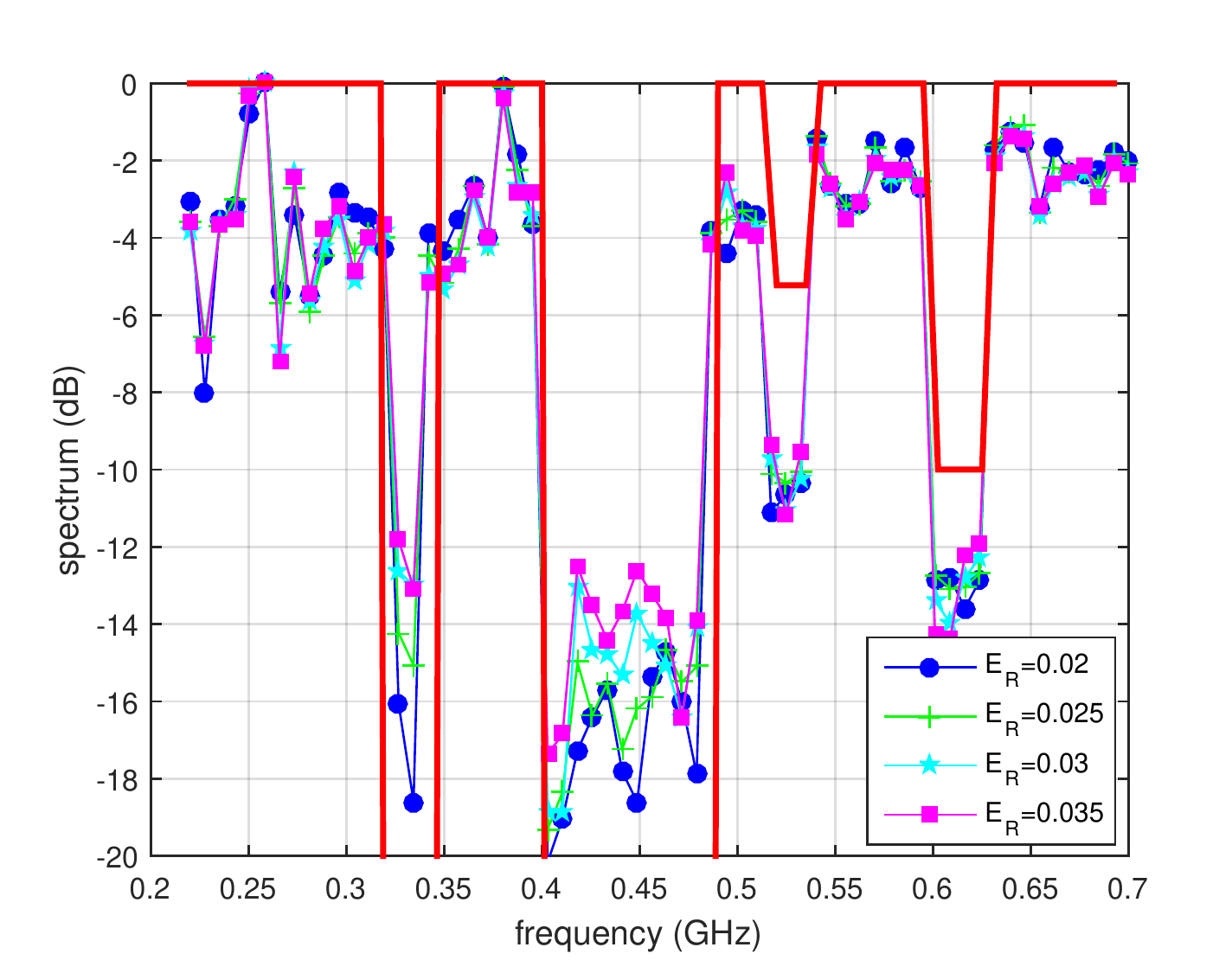}\label{Fig:TVBand}}\hfil
		\subfloat[value of cost function vs. $E_R$]{\includegraphics[scale=0.6]{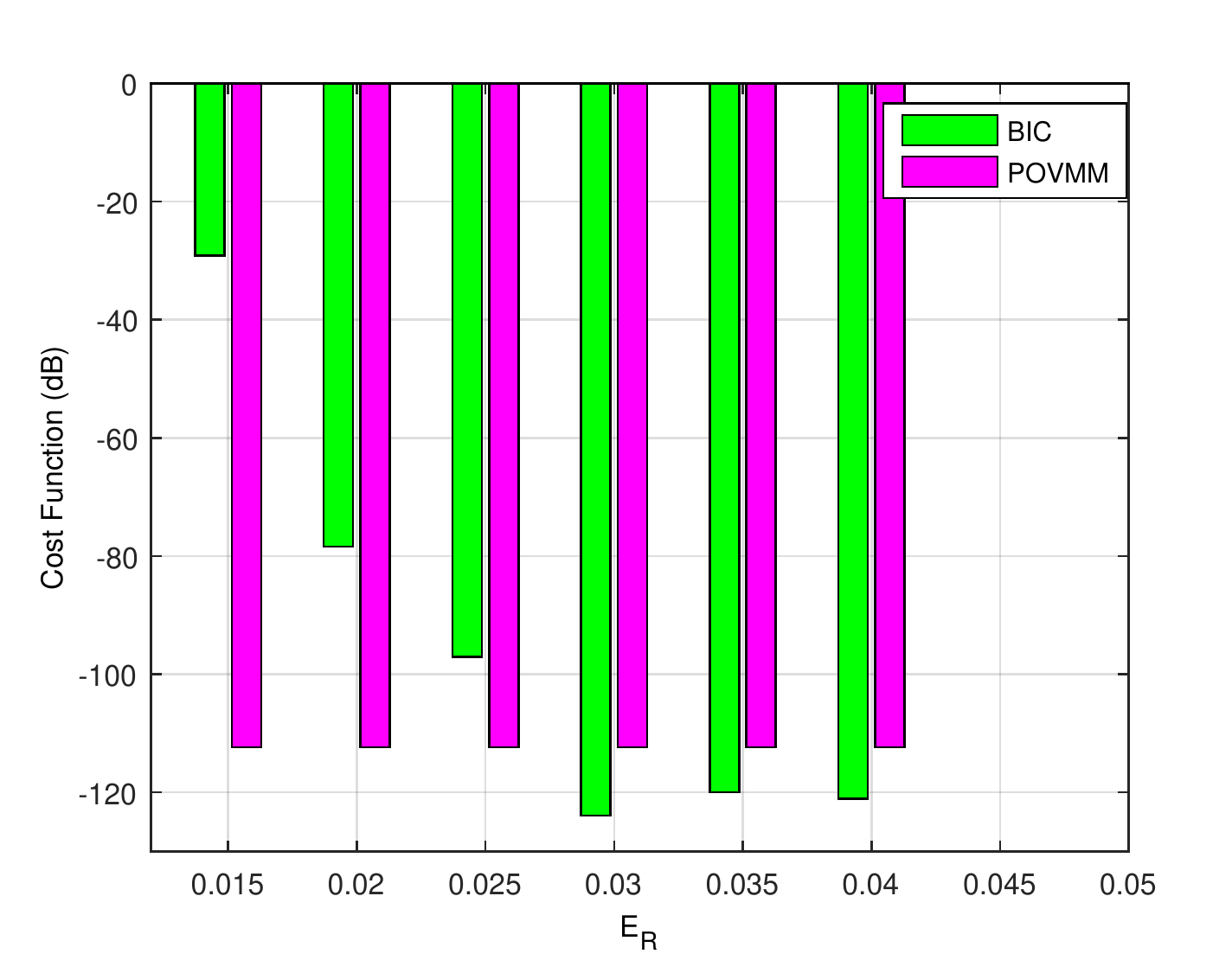}\label{Fig:CostBIC}}
	\end{center}
	\caption{Nullforming beampattern design}\label{Fig:Nullforming}
\end{figure*}

\section{Numerical Results}
\label{Sec:Results}
%In this section, we first invesigate the convergence rate of the BIC algorithm and evaluate how $\lambda$ in the cost function of \eqref{Eq:BPreal} affects the convergence rate. Then 

We examine the performance of the proposed BIC by comparing it against the following well-known methods:
%
%
%\subsection{Convergence Rate Analysis}
%We plot the value of cost function $f(\mb x)$ and $\mb x^H \mb x - L$ versus the iteration index with different values of $\lambda$ in Fig. \ref{Fig:ConvergenceRate}. Fig. \ref{Fig:CR_CostFunction} shows that the smaller $\lambda$ not only generates the smaller cost function at convergence, which means that the optimized beampattern is closer to the ideal beampattern, but also results in faster convergence. However, Fig. \ref{Fig:CR_Energy} tells us that a large $\lambda$ ($\geq 100$) generates the constant modulus waveform and waveform from small $\lambda$ departs from constant modulus. Note that though $\mb x^H \mb x$ is the energy of the waveform, this can be used to see whether constant modulus is achieved or not because $\mb x^H \mb x \geq L$ always holds at every iteration step. From Figs. \ref{Fig:CR_CostFunction} and \ref{Fig:CR_Energy}, $\lambda$ can be chosen appropriately according to which is more important, fidelity to the desired beampattern and fast convergence or constant modulus.
%
%\subsection{Methods of Comparison}
%We evaluate and compare the proposed BIC against state-of-the-art algorithms for waveform design and beampattern optimization under a constant modulus constraint. 

%Note that none of the following algorithms do not exploit the spectral interference constraint.

\begin{itemize}
	\item \textbf{Phase-only variable metric method (POVMM) \cite{Guo15}:} POVMM performs null forming beampattern design by optimizing phases of the waveform under the constant modulus constraint but no spectral constraint is involved.
	\item \textbf{SHAPE \cite{Rowe14}:} The SHAPE algorithm is a computationally efficient method of designing sequences with desired spectrum shapes. In particular, the spectral shape is optimized as a cost function subject to the constant modulus constraint but the resulting beampattern is an outcome (not explicitly controlled).
	\item \textbf{JDO SSPARC \cite{Guerci15}:} An approach for beamforming that maximizes the signal power through the forward channels while simultaneously minimizes the response at the co-channels. Note that, JDO SSPARC does not control the spectral shape of the waveform in the frequency domain.
	%	\item \textbf{SDR with randomization \cite{Cui14TSP}:} Cui \textit{et al.} proposed the MIMO waveform design algorithm maximizing the signal-to-interference-plus-noise ratio (SINR) under the constant modulus constraint. To solve the constrained optimization problem in a tractable manner, they employed SDR with randomization. Though the algorithm in \cite{Cui14TSP} is for SINR maximization, it is also applicable to the beampattern design since the cost function has a quadratic form.
	\item \textbf{Wideband beampattern formation via iterative techniques (WBFIT) \cite{He11}:} The WBFIT synthesize wideband MIMO beampattern under the constant modulus or low PAR. They first find the Fourier transformed waveform in the frequency domain and then fit the DFT of the waveform to the result of the first step subject to the enforced PAR constraint. 
\end{itemize}

\textit{Remark:} The initial sequence (waveform code) adopted in the numerical results is a pseudo-random sequence of unit magnitude entries. The proposed algorithm  consistently converges to a lower objective function value regardless of the initial sequence.

\subsection{Nullforming Beampattern Design}
\label{Subsec:NBE}

\noindent We compare  BIC to state-of-the-art phase-only variable metric method (POVMM) method \cite{Guo15} and the SHAPE algorithm \cite{Rowe14}. The experimental set up is as follows: We simulate a linear MIMO radar antenna array of $M=16$ elements with half-wavelength spacing and number of time samples $N=32$. In Algorithm 1 and 2, we set $\zeta = 10^-5$. Further, $K=3$, $\theta=[10^{\circ}, 40^{\circ}, 120^{\circ}]$. We assume a carrier frequency of $f_c=300$ MHz and allowed access to the 225-328.6 MHz and 335-400.15 MHz bands allocated for the U.S. Federal Government. We then place a notch in the band 328.6-335 MHz.

\begin{figure}[!t]
	\centering
	\subfloat[]{\includegraphics[scale=0.6]{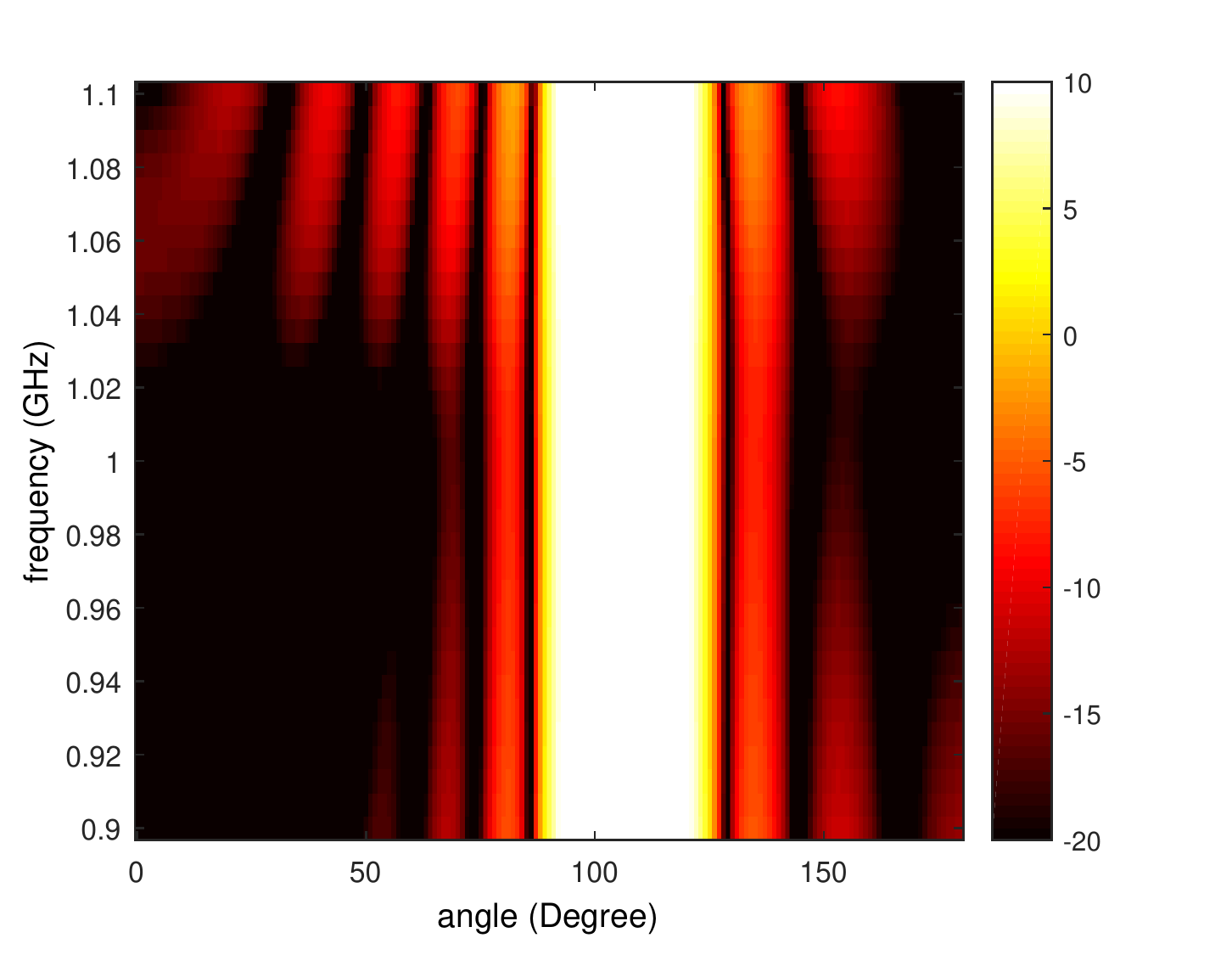}\label{Fig:95120unconstrained}}\\
	\subfloat[]{\includegraphics[scale=0.6]{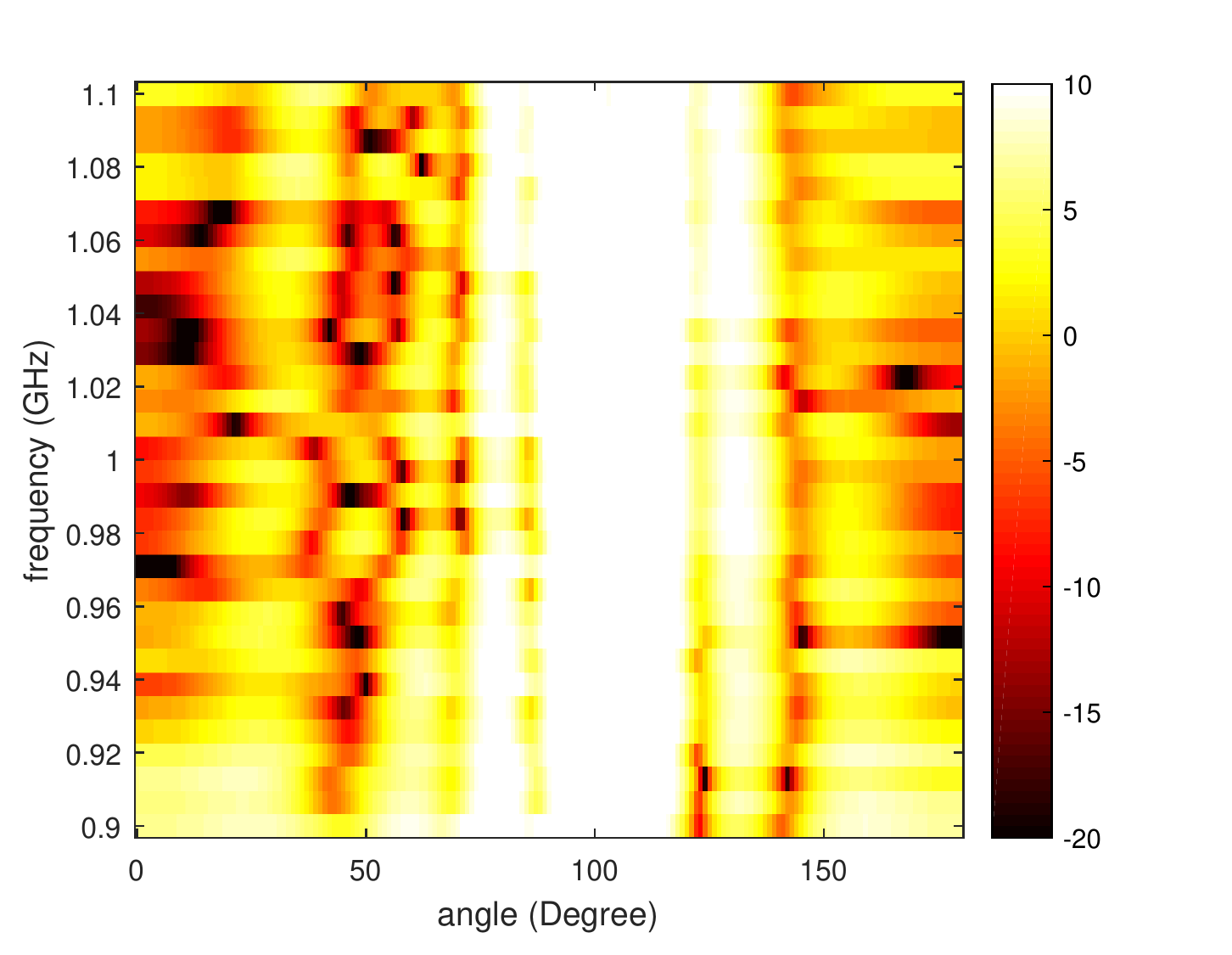}\label{Fig:95120WBFIT}}\\
	\subfloat[]{\includegraphics[scale=0.6]{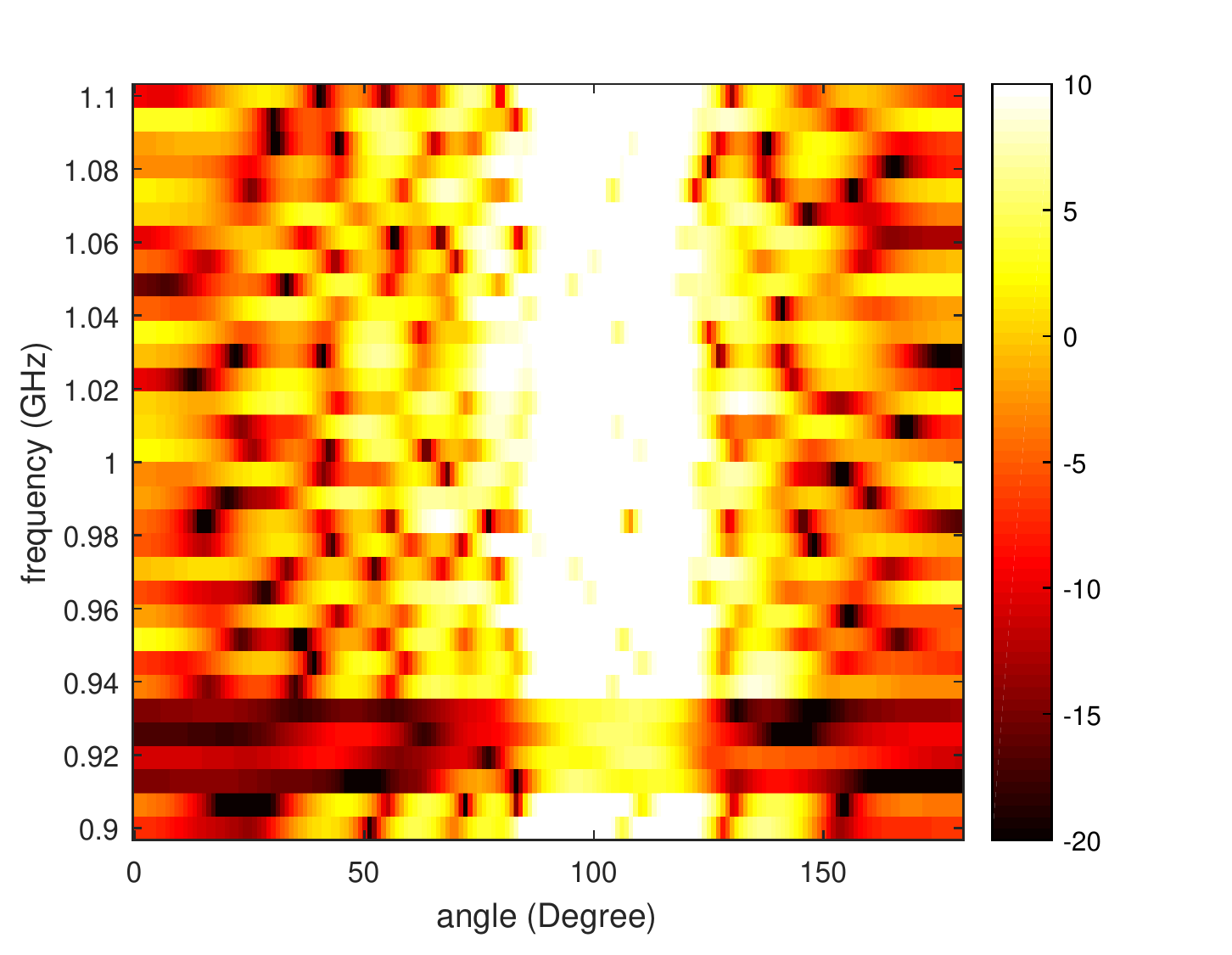}\label{Fig:95120BIC}}\\
	\caption{Plot of the beampattern. (a) unconstrained (b) WBFIT method (c) BIC (Proposed method)} \label{Fig:95120}
\end{figure}

Fig. \ref{Fig:Nullforming} shows the results for nullforming beampattern of BIC versus POVMM and SHAPE. Fig. \ref{Fig:CBICPO}, we plot the resulting beampattern versus the angle. Each of BIC, POVMM, JDO SSPARC  achieve nulls in the desired angles, i.e. desired spatial control. SHAPE lacks a spatial control component by virtue of its design. Note that the forward channel for JDO SSPARC is set to be  $\theta=[80^{\circ} \text{ to } 100^{\circ}]$, however, unlike the other methods, the resulting waveform is non-constant modulus. On the other hand, Fig. \ref{Fig:CBICPOS} plots the spectrum versus the frequency. Here, BIC and SHAPE effectively suppress the energy in the frequency bands where the transmission should be mitigated.  Unsurprisingly, POVMM do not provide the desired suppression in the frequency bands of interest because it is not designed for the same. In summary, only the proposed BIC enables the desired spatio-spectral control. 

In Fig. \ref{Fig:TVBand}, we investigate a more practical scenario. We assume we have access to licensed television braodcasts (UHF) that occur from 470 to 698 MHz as well as the 225-328.6 MHz and 335-400.15 MHz bands as in Fig. \ref{Fig:CBICPO}. Each television station is allocated 6 MHz of bandwidth and we assume there are 7 stations are licensed for operation (Ch. 21-23, 512-536 MHz and Ch. 36-39, 602-626 MHz). We plot the spectrum as achieved by different methods with different threshold ($E_{R}$) values in Fig. \ref{Fig:TVBand} and as expected a smaller threshold ($E_{R}$ value) leads to a tighter spectral constraint. 

It is also shown in Fig. \ref{Fig:TVBand} that the spectral constraint can be set to incorporate the information of the distance of a TV station/wireless interferer to the radar . In particular the results in Fig. \ref{Fig:TVBand} assume that the stations of Ch. 36-39 are closer to the radar than Ch. 21-23.  $\bar{\mathbf y}$ in (\ref{Eq:BPNF}) is appropriately set (see red curve in Fig. \ref{Fig:TVBand}) to control the relative importance of frequency bands. 

In Fig. \ref{Fig:CostBIC}, we show the cost function value corresponding to POVMM and the proposed BIC (recall, they optimize the same cost function in the nullforming case). The BIC method achieves similar cost function values or lower when $E_R\ge 0.03$. This is particularly remarkable because BIC additionally enforces the spectral constraint. Finally, the performance of the proposed BIC method in terms of the total normalized interference energy as well as average spatial cancellation in the three nulls is shown in Table \ref{Tb:IESC}.

\begin{table}[h]
	\begin{center}
		\caption{total interference energy (IE) versus average spatial cancellation (SC)}\label{Tb:IESC}
		\begin{tabular}{|c|c|c|}
			\hline
			\textbf{Method} & Normalized IE &Average SC (dB)\\\hline
			POVMM & 0.124 &166.4\\\hline
			BIC ($E_R = 0.02$) &  0.007 &-127.83\\\hline	
			BIC ($E_R = 0.025$) & 0.0092& -169.2\\\hline	
			BIC ($E_R = 0.03$) & 0.0114& -209.7\\\hline
			BIC ($E_R = 0.03$) & 0.0198& -280.6\\\hline
		\end{tabular}
	\end{center}
	
\end{table}

\subsection{Full Beampattern Design}
\label{Subsec:WBE}
For wideband beampattern design, we compare BIC to the state-of-the-art WBFIT method \cite{He11}. The experimetal set-up used in Fig. \ref{Fig:95120} and Fig. \ref{Fig:Box} is following. The number of transmit antennas $M=10$, the number of time samples $N=32$, the carrier frequency of the transmit signal $f_c=1$ GHz and the bandwidth $B=200$ MHz and the spatial angle is divided into $K=180$ grid points.

In Fig. \ref{Fig:95120}, we place a notch in the band 910-932 MHz and consider the following desired transmit beampattern

\begin{equation}
d(\theta,f)=
\begin{cases}
1 & \theta=[95^{\circ}, 120^{\circ}]\\
0 & \text{Otherwise}.
\end{cases}
\end{equation}
Fig. \ref{Fig:95120} shows the angle-frequency plot of the beampattern for WBFIT method (no spectral constraint) and BIC with the spectral  constraint ($E_R=0.01$). The BIC method is able to keep the energy of the waveform in particular frequency band low enough as well as achieve higher suppression at the undesired angles compared to WBFIT.

In Fig. \ref{Fig:Box}, we simulate a more challenging practical scenario. We assume that the beampattern should be suppressed at the angles of 40$^\circ$ through 80$^\circ$ in the frequency band [943.75 MHz, 981.25 MHz] and at 120$^\circ$ through 160$^\circ$ in [962.5 MHz, 1,000 MHz], that is,
\begin{equation}
\label{eq:specwidesetup}
d(\theta,f)=
\begin{cases}
0 & \theta=[40^{\circ}, 80^{\circ}] \text{ and } f=[943.75, 981.25]\\
0 & \theta=[120^{\circ}, 160^{\circ}] \text{ and } f=[962.5, 1000]\\
1 & \text{Otherwise}.
\end{cases}
\end{equation}
This ideally appears as black boxes in the angle-frequency beampattern plots. We also assume that transmission should be restricted at all directions in the frequency band [1.025 GHz, 1.0625 GHz]. This restriction can be performed by the spectral constraint. First, as shown in Fig. \ref{Fig:boxWBFIT}, since WBFIT does not have the spectral constraint, the notch of frequency band [1.025 GHz, 1.0625 GHz] does not appear. Second, the black boxes are not seen so clearly in Fig. \ref{Fig:boxWBFIT}. Lastly, WBFIT suppresses the energy of the waveform unnecessarily in the frequency band where we do not have any restriction (e.g. [1.0625 GHz, 1.1 GHz]). On the other hand, the proposed BIC effectively suppresses and restricts the transmitted energy in the desired frequency bands and angles and generate enough power elsewhere.

\begin{table}[h]
	\begin{center}
		\caption{Converged cost function values in dB}\label{Tb:CostFunction}
		\begin{tabular}{|m{3cm}|m{3cm}|}
			\hline
			\textbf{Method} & cost function (dB)\\\hline
			Unconstrained & 15.4681\\\hline
			WBFIT & 34.7744\\\hline
			BIC ($E_R = 0.01$) & 32.6461\\\hline	
			BIC ($E_R = 0.02$) & 31.3286\\\hline	
			BIC ($E_R = 0.03$) & 30.8468\\\hline
		\end{tabular}
	\end{center}
	
\end{table}

Finally, we compare values of the cost function of each algorithm for the same scenario in (\ref{eq:specwidesetup}) and the results are reported in Table \ref{Tb:CostFunction}. In Table \ref{Tb:CostFunction}, unconstrained wideband beampattern design (not even a constant modulus constraint) plays the role of a lower bound. BIC outperforms WBFIT even as it incorporates an additional spectral constraint.

\begin{figure}[!t]
	\centering
	\subfloat[]{\includegraphics[scale=0.6]{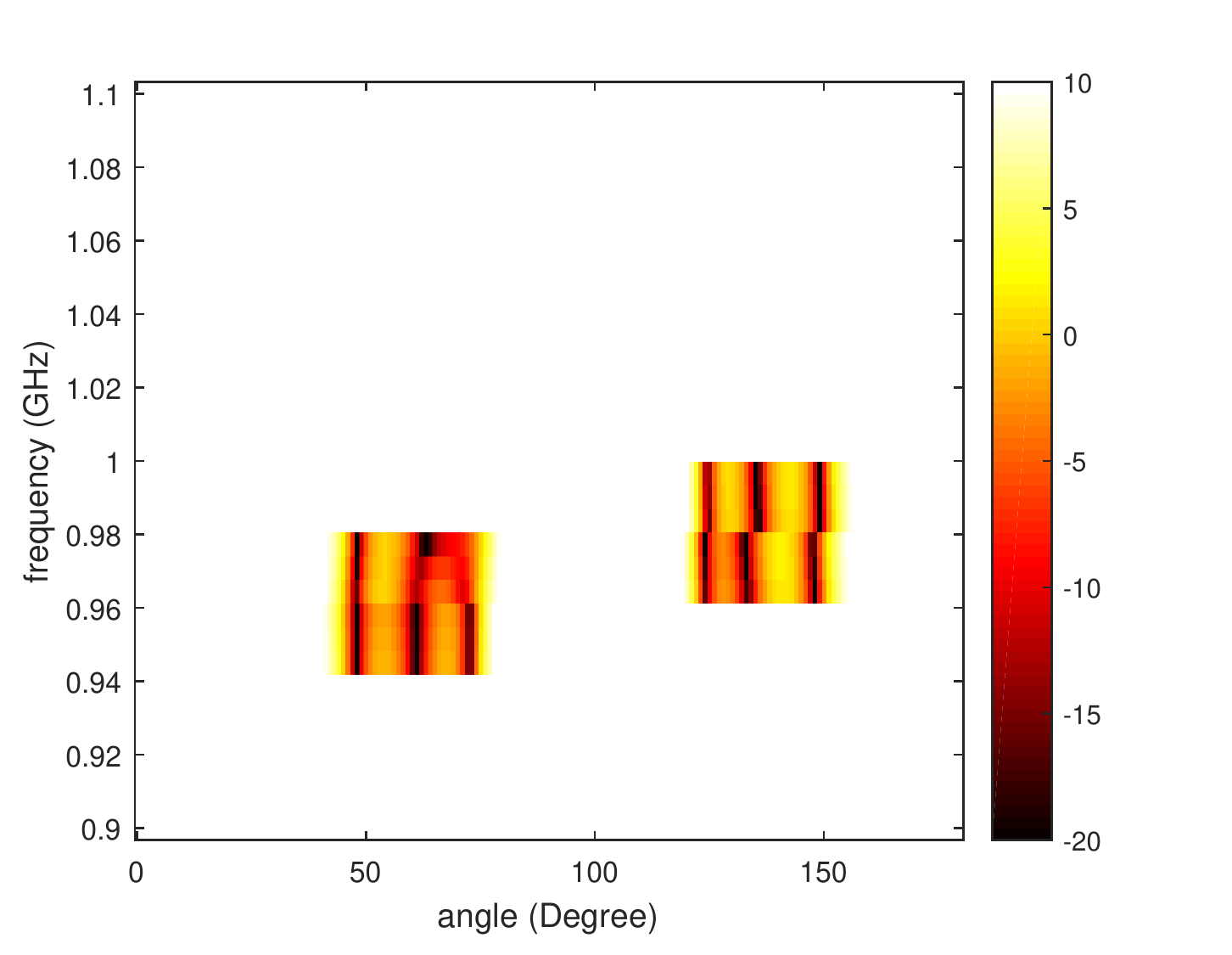}\label{Fig:boxunconstrained}}\\
	\subfloat[]{\includegraphics[scale=0.6]{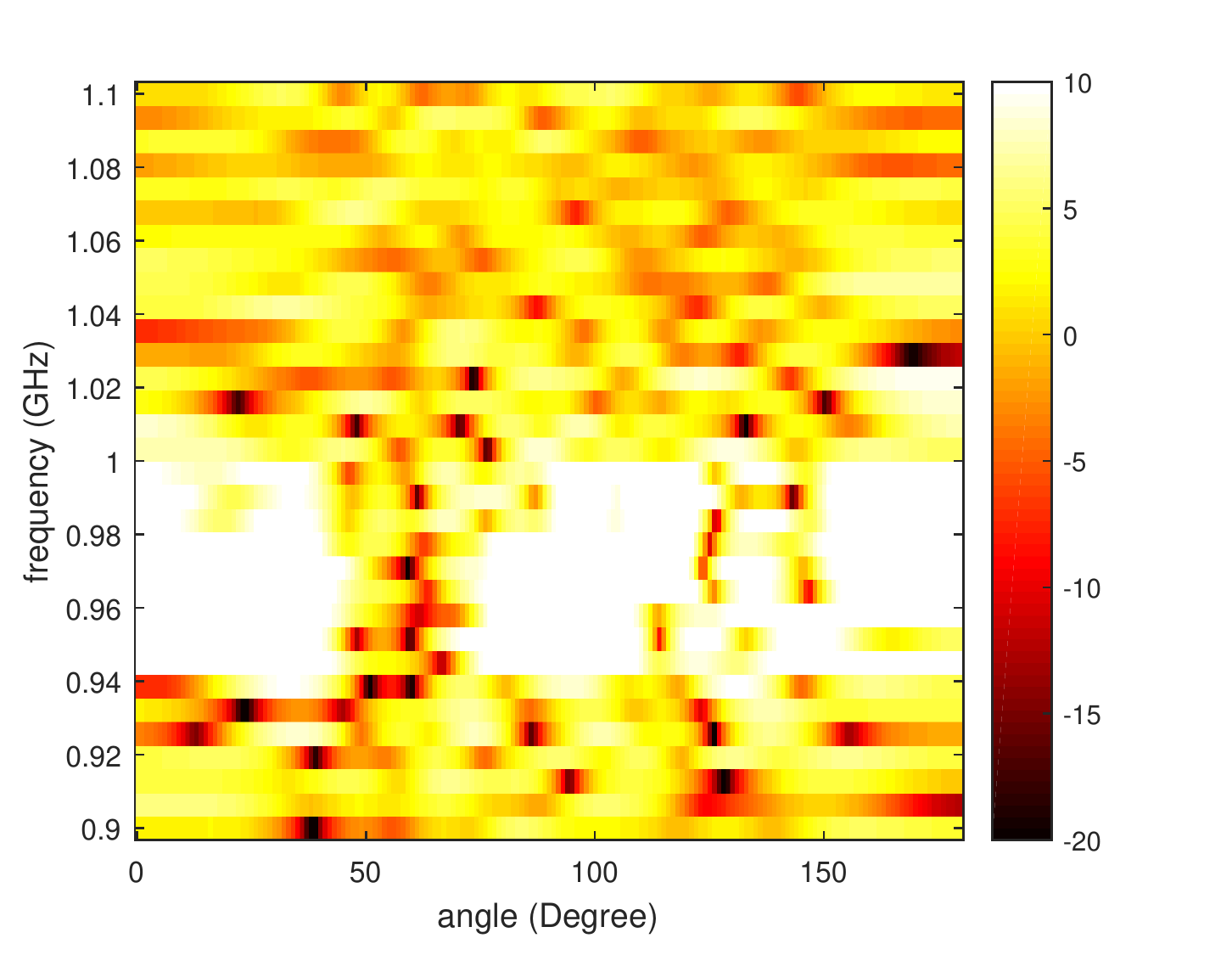}\label{Fig:boxWBFIT}}\\
	\subfloat[]{\includegraphics[scale=0.6]{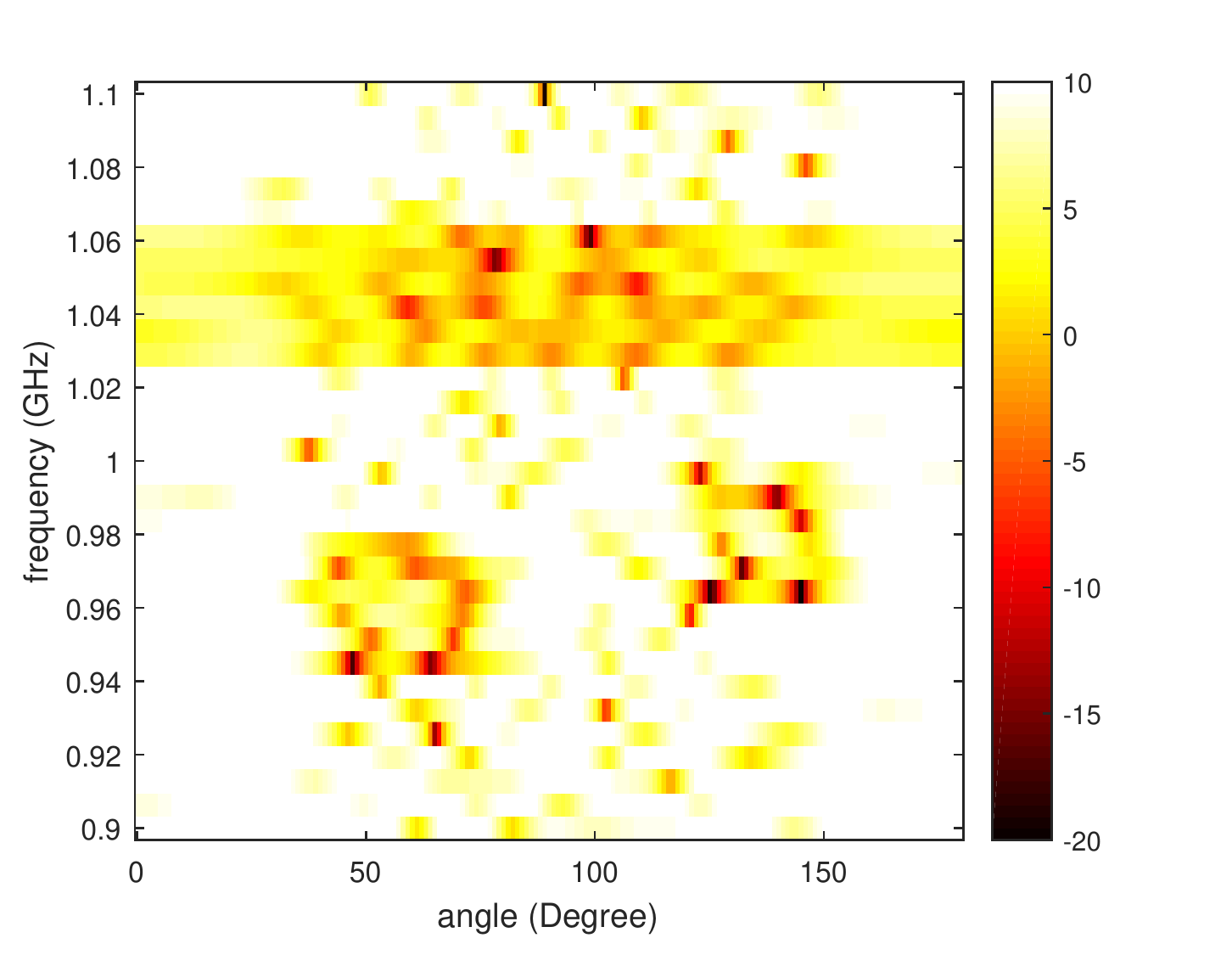}\label{Fig:boxBIC}}\\
	\caption{Plot of the beampattern. (a) unconstrained (b) WBFIT method (c) BIC (Proposed method)} \label{Fig:Box}
\end{figure}

Table \ref{TableII} shows computational complexity and run times as observed in the simulation. Note that, both POVMM and WBFIT do not have a spectral constraint unlike the proposed BIC method, hence, they have a computational advantage over the proposed BIC method. However, although POVMM has lower complexity per iteration, it needs more iterations to achieve the same performance as BIC for high $E_r$ values. For a fair comparison, BIC as well as competing methods are initialized with the same waveform, which is a psuedo-random vector of unit magnitude complex entries.

  \begin{table}[ht]
  \caption{Computational complexity for different methods} \label{TableII}
  \centering
  \begin{tabular}{|c|c|c|c|}
  \hline
  \textbf{Method}  & \textbf{Sim. Time (s)} & \textbf{iter.}&\textbf{Comp. order}\\ \hline
    WBFIT              & $0.7106$             & 80&$\mathcal{O}(FN{M}^{2})$\\ \hline
   POVMM              & $11.5081$             & 600&$\mathcal{O}(F{L}^{2})$\\ \hline
  BIC ($\zeta=10^{-11}$,$E_R = 0.4$)  & $9.5239$     & 25&$\mathcal{O}(F{L}^{2.373})$\\ \hline
  \end{tabular}
  \end{table}

\section{Conclusion}
\label{Sec:Conclusion}

Our work achieves tractable spatio-spectral beampattern design by waveform optimization for MIMO radar in the presence of constant modulus and spectral constraints. The central idea of our analytical contribution is to successively achieve constant modulus (at convergence), while solving a quadratic program with linear equality and inequality constraints in each step of the sequence. Because each problem in the sequence has a closed form, this makes our method computationally attractive. We establish new analytical properties of the BIC algorithm such as non-increasing cost function in each iteration and guaranteed convergence. Further, we show experimentally that the proposed BIC can achieve superior beampattern accuracy compared to many state-of-the-art methods even as BIC solves a spectrally constrained problem. Future work could consider the incorporation of additional constraints such as waveform similarity \cite{Friedlander07,Chen09} and explore further optimality properties of the BIC solution.

%%%% References
%\scriptsize
%\bibliographystyle{C:/Users/BXK265/BOXSYN\string~2/BOXSYN\string~1/LaTex/Bibliography/IEEEbib}
%\bibliography{C:/Users/BXK265/BOXSYN\string~2/BOXSYN\string~1/LaTex/Bibliography/IEEEabrv,C:/Users/BXK265/BOXSYN\string~2/BOXSYN\string~1/LaTex/Bibliography/Bosung}
\bibliographystyle{IEEEbib}
\bibliography{IEEEabrv,Bosung}

\end{document}